\documentclass[runningheads, 11pt]{llncs}
\usepackage[margin=1in]{geometry}
\usepackage[T1]{fontenc}
\usepackage{graphicx}
\usepackage{macros}

\begin{document}
\title{The Complexity of Justified Representation with Additive Utilities}
\author{Carmel Baharav\inst{1}\orcidID{0009-0004-3634-6721} \and
Jakob de Raaij\inst{2}\orcidID{0009-0003-5568-7751} \and
Agnès Totschnig\inst{1}\orcidID{0009-0004-7716-8183}}
\institute{Massachusetts Institute of Technology, Cambridge MA 02139, USA
\email{\{cbaharav,agnest\}@mit.edu}
\and
Harvard University, Cambridge MA 02138, USA
\email{jderaaij@fas.harvard.edu}\\}
\maketitle
\begin{abstract}
We study the computational complexity of satisfying proportional representation\emdash in particular proportional, extended, and fully justified representation (PJR, EJR, and FJR)\emdash in participatory budgeting and committee elections with additive utilities. First, we give a complete picture of the complexity of the axioms for a constant number of voters or voter types. Second, we show that even for committee elections with integer utilities bounded above by a small constant, satisfying FJR is intractable, giving the first strong \textbf{NP}-hardness result for a justified representation axiom. Third, we extend the Expanding Approvals Rule to committee elections with additive utilities and show that it satisfies PJR. Lastly, we show that no sequential voting rule can improve on the known positive result, thus proving that novel, substantially different voting rules are needed to surpass these boundaries. Beyond their theoretical merit, our results carry practical importance, as multi-winner voting with additive utilities has recently been gaining prominence in online deliberation and real-world participatory budgeting.

\keywords{Participatory Budgeting  \and Multi-Winner Voting \and Justified Representation }
\end{abstract}
\section{Introduction} 
There are numerous high-stakes democratic processes where the outcome is a subset of the alternatives, including committee elections, participatory budgeting, and deliberation. In recent years, these processes have become even more widespread, with participatory budgeting (PB) in particular exploding in popularity \cite{de2022international}. One key objective in selecting the outcome in these settings is \emph{proportional representation}: Each cohesive group of voters should be entitled to selecting a fraction of the outcome proportional to their size. The computational social choice community has responded to this with a commensurate flurry of work focused on formalizing this ideal and establishing a range of demanding proportionality axioms, known as \emph{justified representation}. 

Existing work on justified representation with additive utilities has primarily concentrated on two settings: The general participatory budgeting setting where voters have general, additive utilities for alternatives and alternatives have arbitrary costs \cite{rey2025computationalsocialchoiceindivisible}, and its special case where utilities are binary and all costs are equal, called committee elections with approval preferences \cite{lackner_overview}. 

Proportional representation was first formalized for committee elections with approval preferences in the seminal work of Aziz et al. \cite{aziz-jr-ejr}, in which the authors introduce a compelling notion of proportionality called \emph{Extended Justified Representation (EJR)}. Sánchez-Fernández et al. \cite{sanchez2017proportional} defined a slightly weaker axiom called \emph{Proportional Justified Representation (PJR)} that is implied by EJR. Several voting rules computable in polynomial-time and satisfying EJR \cite{aziz2017polynomialtimealgorithmachieveextended,peters2020proportionality} and PJR \cite{brill2024phragmen} are known.

On the other end of the spectrum, Los et al. \cite{los2022proportional} extend PJR and Peters et al. \cite{equalshares_fjr} extend EJR to the general participatory budgeting setting. Peters et al. also  revisit a polynomial-time algorithm called \textit{Method of Equal Shares (MES)} \cite{peters2020proportionality}, and show that it achieves EJR for binary utilities
and \emph{Extended Justified Representation up to 1 Alternative (EJR1)}, a relaxed version of EJR, in the general utility setting. 
They complement these positive results with the observation that \emph{even for just one voter}, finding an outcome satisfying EJR in the general utility and cost setting is weakly \textbf{NP}-hard\footnote{Throughout this paper, we say that a search problem is \textbf{NP}-hard if a polynomial-time algorithm would imply $\textbf{P}=\textbf{NP}$.}, by a reduction from the Knapsack problem.

Moreover, Peters et al. introduce a more stringent notion of justified representation called \emph{Fully Justified Representation (FJR)}, which implies EJR. They show that an outcome satisfying FJR always exists and can be found via their \emph{Greedy Cohesive Rule (GCR)}, which is \textbf{NP}-hard to compute. FJR is stronger than EJR even for committee elections with approval preferences; whether a polynomial time computable voting rule satisfying FJR exists is an important open problem \cite[Q5]{lackner_overview}.

Here, we set out to advance the broader agenda by thoroughly understanding the computational hardness of satisfying the justified representation axioms for additive utilities. In particular, we focus on the intermediate cases between the two extremes of committee elections with approval preferences and the general participatory budgeting setting. Not only are they of independent interest, it also progresses us towards a complete picture of the computational tractability of the two more commonly studied extremes.

Our first research question is motivated by the striking difference between the two settings for a single voter, and the fact that Knapsack admits pseudo-polynomial time algorithms: \begin{center}
    \textbf{Question 1.} For a constant number of voters or voter types, what precise utility and cost settings cause the shift in computational tractability of finding outcomes satisfying PJR/EJR/FJR?
\end{center}
We see the value of this question mostly in identifying the ``hardness boundary'' of the justified representation axioms. Nonetheless, there exist some practical scenarios with a very small number of voters (or voter types), such as a city seeking to be  proportionally representative to its districts when selecting a set of measures to take, each of which will benefit some districts more than others.
 
For a general number of voters and an arbitrarily structured voter pool, there is also a distinct gap in the existing complexity landscape for the setting of additive utilities and unit costs: While we know that MES gives EJR for approval utilities and arbitrary cost, it is only guaranteed to give an EJR1 outcome if utilities are arbitrary, even for unit costs. However, unlike in the general setting, there is no known intractability obstacle to satisfying PJR, EJR, and FJR with unit costs.
 This motivates our second research question: \begin{center}
    \textbf{Question 2.} Where is the ``hardness boundary'' for a general number of voters in the additive utilities, unit costs setting?
\end{center}

Besides the theoretical value of fully understanding how each of the parameters affects the computational tractability, this intermediate setting is also interesting from a practical perspective. In AI-aided deliberation, Fish et al. \cite{fish2025generativesocialchoice} and De et al. \cite{de2025questionquestionsauditingrepresentation} use general additive utilities to find a slate of statements satisfying justified representation to proportionally summarize the opinions of the voters.
In participatory budgeting elections, numerous cities like Strasbourg and Gdansk \cite{yang2024designing} allow voters to express additive utilities beyond approval voting, while some cities, like New York City \cite{NYC}, allocate a fixed amount of money to every proposed project, essentially enforcing unit-cost. To our knowledge, no city currently does both concurrently (soliciting additive utilities but fixing the costs of projects to be the same), but it is conceivable that cities could want to do this --- it enables voters to have more expressive preferences without having to do complex reasoning or strategizing about budget constraints. 

 While eliciting exact cardinal utilities of the voters is challenging, various mechanisms have been proposed: In online deliberation, Fish et al. \cite{fish2025generativesocialchoice} use LLMs to estimate the cardinal utility of a voter for a statement, given survey responses by this voter. Ideas For Change\footnote{\url{https://www.ideasforchange.org}}, an online deliberation platform deployed in various cities across the US, allows users to hold a button longer to indicate stronger support, thus a larger utility. In participatory budgeting, elections have used variants of range voting, scoring rules, and cumulative voting for obtaining cardinal utilities \cite{WikiPB}. Moreover, cost utilities (i.e., assuming a voter's utility for a project is the project's cost if they approve it, zero otherwise) are natural, additive utilities that can be obtained from approval votes.

Finally, we note that almost all polynomial-time voting rules that exist in the multi-winner voting literature are \emph{sequential}: They add one alternative at a time to the outcome, deciding on the next alternative only based on the alternative itself and the outcome up to this point. At the same time, voting rules such as GCR that satisfy the strongest proportionality notions `plan ahead' and may select an alternative only if it complements well with other alternatives. This motivates our final research question:
\begin{center}
    \textbf{Question 3.} Are there settings in which no sequential voting rules or voting rules with bounded ``planning ahead'' can satisfy PJR/EJR/FJR?
\end{center}

\subsection{Contributions}

 In \Cref{sec:constant-n}, we give a complete account of the complexity of finding outcomes satisfying PJR, EJR, and FJR for instances with \emph{a constant number of voters}, thus answering \textbf{Question 1}. First, we show that as long as all utilities are bounded by a polynomial\footnote{Whenever we state that utilities or costs are bounded by a polynomial, this polynomial is with respect to the size of the election instance, formally defined in \Cref{sec:model}. Equivalently, from a complexity viewpoint, this means that the bounded quantity is given in unary.}, GCR\emdash satisfying FJR\emdash can be implemented in polynomial time using dynamic programming (DP). We then design a modified version of GCR that satisfies EJR and, for costs bounded by a polynomial, can be implemented in polynomial time, again using DP. This only leaves the question of FJR with general utilities and polynomially-bounded costs; we show that in this setting finding an outcome satisfying FJR is \textbf{NP}-hard, even for unit costs and just two voters.
Interestingly, we observe that there is an asymmetry between the effect of general utilities and general costs: General utilities render FJR weakly \textbf{NP}-hard even for two voters, whereas the same does not hold in reverse. Our results for instances with a constant number of voters are summarized in \Cref{tab:constant-n-results}. Additionally, in \Cref{app:bounded-voter-types}, we show that our results for a constant number of voters still apply if the number of distinct voter \emph{types}, i.e. the number of distinct utility functions, is constant. 

\begin{table}[htbp]\vspace{-1.5em}
    \centering
    \caption{Complexity Results for a Constant Number of Voters} 
    \label{tab:constant-n-results}
    \medskip 
    
    \setlength{\tabcolsep}{4pt} 
    \renewcommand{\arraystretch}{1.4} 
    \begin{NiceTabular}{l c c c}
        \toprule
        \textbf{Utilities \textbackslash{} Costs} & \textbf{Unit} & \textbf{Polynomial} & \textbf{General} \\
        \midrule
        \textbf{Approval}   & \cellcolor{cyan!30} & \cellcolor{cyan!30} & \cellcolor{cyan!30} \\
        \textbf{Polynomial} & \cellcolor{cyan!30} & \cellcolor{cyan!30} & \cellcolor{cyan!30}\Cref{thm:fjr-constant-n} \\
        \textbf{General}    & \cellcolor{yellow!40}\Cref{thm:fjr-weak-hardness} & \cellcolor{yellow!40}\Cref{thm:mgcr-constant-n} & \cellcolor{red!40}\cite[Prop. 1]{equalshares_fjr}  \\
        \bottomrule
    \CodeAfter
    \tikz \draw[white, line width=0.5pt]
        ([yshift=-0.35pt]2-|3) -- (5-|3);
    \tikz \draw[white, line width=0.5pt]
        ([yshift=-0.35pt]2-|4) -- (5-|4);
    \tikz \draw[white, line width=0.5pt] (3-|2) -- (3-|5);
    \tikz \draw[white, line width=0.5pt] (4-|2) -- (4-|5);
    \end{NiceTabular}%
    \hspace{1.5em}%
    \renewcommand{\arraystretch}{1.2}
    \footnotesize 
    \begin{tabular}{@{}ll@{}} 
        \fcolorbox{gray}{cyan!30}{\makebox[1em]{\rule{0pt}{1ex}}} & $\P$ for FJR, EJR, PJR \\
        \fcolorbox{gray}{yellow!40}{\makebox[1em]{\rule{0pt}{1ex}}} & $\P$ for EJR \& PJR, hard for FJR \\
        \fcolorbox{gray}{red!40}{\makebox[1em]{\rule{0pt}{1ex}}} & \textbf{NP}-hard for PJR, EJR, and FJR \\
    \end{tabular}
\end{table}

For unbounded numbers of voters, we show in \Cref{sec:fjr-hardness} that finding an outcome satisfying FJR is strongly \textbf{NP}-hard, even for unit-costs and utilities in $\set{0,1,...,5}$. To the best of our knowledge, this is the first proof of \emph{strong} \textbf{NP}-hardness for a justified representation axiom, and in fact also the only hardness result beyond the one-voter Knapsack reduction due to \cite{equalshares_fjr}. We furthermore show that the proof that satisfying FJR is strongly \textbf{NP}-hard also holds in the setting of cost-utilities in $\set{0,1,...,5}$.

In \Cref{sec:general-n-pjr}, we extend a polynomial-time voting rule called the \emph{Expanding Approvals Rule} defined for ordinal preferences or clustering to the setting of committee elections with additive utilities. We prove that it satisfies PJR. Thus, we make significant progress towards answering \textbf{Question 2}: While neither negative nor positive results about proportionality for committee elections with additive utilities were known before, our results imply that PJR is achievable by a reasonable voting rule while FJR is hard. 
Our results are summarized in \Cref{tab:general-n-results}.  Interestingly, and in contrast to our results for a constant number of voters, the hardness boundary here seems to lie between approval utilities and utilities bounded above by an arbitrary constant as well as unit costs and costs bounded above by an arbitrary constant, not between polynomially-bounded and general utilities or costs.

\begin{table}[htbp]
    \centering
    \caption{Complexity Results for a General Number of Voters} 
    \label{tab:general-n-results}
    \medskip
    
    \begin{tabular}{@{} M{3in} @{\hspace{1.5em}} M{2.85in} @{}}
        
        \renewcommand{\arraystretch}{1.4} 
        \begin{NiceTabular}{l C{0.6in} C{0.6in} C{0.69in}}
            \textbf{PJR}  \\
            \toprule
            \textbf{Util \textbackslash{} Costs} & \textbf{Unit} & \textbf{Constant} & \textbf{General} \\
            \midrule
            \textbf{Approval}   & \cellcolor{cyan!30} & \cellcolor{cyan!30} & \cellcolor{cyan!30}  \\
            \textbf{Constant} & \cellcolor{cyan!30}& \cellcolor{orange!50} \Cref{thm:pjr-impossible-for-sequential-cost-util} & \cellcolor{orange!50} \\
            \textbf{General}    & \cellcolor{cyan!30}\Cref{thm:pjr-from-ear} & \cellcolor{orange!50}  & \cellcolor{red!40}\cite[Prop. 1]{equalshares_fjr}  \\
            \bottomrule
        \CodeAfter
            \tikz \draw[white, line width=0.5pt] ([yshift=-0.35pt]3-|3) -- (6-|3);
            \tikz \draw[white, line width=0.5pt] ([yshift=-0.35pt]3-|4) -- (6-|4);
            \tikz \draw[white, line width=0.5pt] (5-|2) -- (5-|5);
            \tikz \draw[white, line width=0.5pt] (4-|2) -- (4-|5);
        \end{NiceTabular}

        &

        \setlength{\tabcolsep}{2pt} 
        \renewcommand{\arraystretch}{1.4} 
        \begin{NiceTabular}{l C{0.6in} C{0.6in} C{0.69in}}
            \textbf{EJR}  \\
            \toprule
            \textbf{Util \textbackslash{} Costs} & \textbf{Unit} & \textbf{Constant} & \textbf{General} \\
            \midrule
            \textbf{Approval}   & \cellcolor{cyan!30} & \cellcolor{cyan!30} & \cellcolor{cyan!30}\cite[Thm. 2]{equalshares_fjr}  \\
            \textbf{Constant} & \cellcolor{orange!50} \Cref{thm:ejr-impossible-for-sequential-committee} & \cellcolor{orange!50} & \cellcolor{orange!50} \\
            \textbf{General}    & \cellcolor{orange!50} & \cellcolor{orange!50}  & \cellcolor{red!40}  \\
            \bottomrule
        \CodeAfter
            \tikz \draw[white, line width=0.5pt] ([yshift=-0.35pt]3-|3) -- (6-|3);
            \tikz \draw[white, line width=0.5pt] ([yshift=-0.35pt]3-|4) -- (6-|4);
            \tikz \draw[white, line width=0.5pt] (5-|2) -- (5-|5);
            \tikz \draw[white, line width=0.5pt] (4-|2) -- (4-|5);
        \end{NiceTabular}

        \\ \addlinespace[1.5em]
        
        \setlength{\tabcolsep}{2pt} 
        \renewcommand{\arraystretch}{1.4} 
        \begin{NiceTabular}{l C{0.6in} C{0.6in} C{0.69in}}
            \textbf{FJR}  \\
            \toprule
            \textbf{Util \textbackslash{} Costs} & \textbf{Unit} & \textbf{Constant} & \textbf{General} \\
            \midrule
            \textbf{Approval}   & \cellcolor{gray!30} & \cellcolor{gray!30} & \cellcolor{gray!30}\\
            \textbf{Constant} & \cellcolor{red!40}\Cref{thm:fjr-strong-hardness} & \cellcolor{red!40} & \cellcolor{red!40} \\
            \textbf{General}    & \cellcolor{red!40}& \cellcolor{red!40}  &  \cellcolor{red!40}\\
            \bottomrule
        \CodeAfter
            \tikz \draw[white, line width=0.5pt] ([yshift=-0.35pt]3-|3) -- (6-|3);
            \tikz \draw[white, line width=0.5pt] ([yshift=-0.35pt]3-|4) -- (6-|4);
            \tikz \draw[white, line width=0.5pt] (5-|2) -- (5-|5);
            \tikz \draw[white, line width=0.5pt] (4-|2) -- (4-|5);
        \end{NiceTabular}

        &
        
        \renewcommand{\arraystretch}{1.2} 
        \footnotesize 
        \begin{tabular}{@{}ll@{}} 
            \fcolorbox{gray}{cyan!30}{\makebox[1em]{\rule{0pt}{1ex}}} & \textbf{P} \\
            \fcolorbox{gray}{gray!30}{\makebox[1em]{\rule{0pt}{1ex}}} & unknown\\
            \fcolorbox{gray}{orange!50}{\makebox[1em]{\rule{0pt}{1ex}}} & unknown; no sequential voting rule\\
            \fcolorbox{gray}{red!40}{\makebox[1em]{\rule{0pt}{1ex}}} & \textbf{NP}-hard \\
        \end{tabular}\\
        
    \end{tabular}
\end{table}

In \Cref{sec:greedy-limitations} we show that the positive results in \Cref{tab:general-n-results} are as good as sequential voting rules can do, thus answering \textbf{Question 3}: No sequential voting rule satisfies EJR for unit-costs and utilities in $\set{0,1,2,3}$ and no sequential voting rule satisfies PJR for cost-utilities with costs in $\set{0,1,2,3}$. We show that our results extend to $O(1)$ utilities and cost-utilities, even when the sequential voting rules are allowed to take $O(1)$ alternatives outside of the outcome into account at once, a natural extension. Since almost all polynomial-time voting rules known to satisfy justified representation axioms beyond approval utilities are sequential, these limitations imply the need for novel voting rules, substantially differing from all existing voting rules.

\section{Preliminaries}\label{sec:model}

In this section we briefly introduce the setting of our theoretical results. We generally follow the notation introduced by Peters et al. in \cite{equalshares_fjr}. We use $\N=\set{0,1,2,...}$ and $\N^+ = \set{1,2,3,...}$. Furthermore, $[x]=\set{1,...,x}$.

\subsection{Combinatorial Participatory Budgeting}

A \emph{participatory budgeting (PB)} instance with additive utilities is a tuple $\elec=(N,C,B,\cost, \paran{u_i}_{i\in N})$. $N$ are the $n$ \emph{voters} and $C$ are the $m$ \emph{alternatives} (or candidates). Each alternative $c\in C$ has a \emph{cost}, given by $\cost \colon C \to \N^+$. For $T\subseteq C$, we write $\cost(T)=\sum_{c\in T} \cost(c)$ for the total cost of $T$. The total \emph{budget} available is $B \in \N^+$. 
Each voter $i\in N$ has an additive utility function specified by  $u_i\colon C \to \N$, so that the utility of voter $i$ for a set of alternatives  $T\subseteq C$ is $u_i(T)=\sum_{c\in T} u_i(c)$.\footnote{Instances with costs and $B$ in $\Q$, of course, can be turned into instances with costs in $\N^+$ by multiplying the costs and $B$ by the least common multiple of the denominators. The same holds true for utilities.} An \emph{outcome} is a set $W\subseteq C$, and voter $i$'s utility for that outcome is $u_i(W)$. An outcome is \emph{feasible} if $\cost(W) \leq B$. A \emph{voting rule} $\mathcal{R}$ is a function that takes in a PB instance $\elec$ and returns a feasible outcome $\mathcal{R}(\elec)$.

There are some notable special cases of this framework. If $\cost(c)=1$ for all $c\in C$, we say that the instance has \emph{unit-costs} and refer to it as a \emph{committee election} instance. This corresponds to a setting where a fixed maximum number of alternatives can be chosen for the outcome, the \emph{committee}. By convention, we use $k$ instead of $B$ for the budget and refer to it as the \emph{committee size}.

Independently, if $u_i(c)\in\set{0,1}$ for all voters $i\in N$ and alternatives $c\in C$, we say that the instance has \emph{approval preferences}, and if $u_i(c)\in\set{0,\cost(c)}$ for all voters $i\in N$ and alternatives $c\in C$, we say that the instance has \emph{cost utilities}. These correspond to a setting where voters approve of some subset of alternatives and disapprove of all other alternatives. A voter's utility for a set of alternatives $T\subseteq C$ then simply is, in the case of approval preferences, the number of alternatives in $T$ they approve, and in the case of cost utilities, the total cost of alternatives in $T$ they approve. 

\subsection{Justified Representation}

A commonly sought-after goal in participatory budgeting and committee elections is \emph{proportional representation}: If a sufficiently cohesive group of voters makes up $x$\% of the voting body, they should hold power over $x$\%, or more, of the budget. More formally, a coalition $S\subseteq N$ of voters can demand any set $T\subseteq C$ of alternatives for which $\frac{\abs{S}}{n}\geq \frac{\cost(T)}{B}$, i.e., the budget share taken up by $T$ is no larger than the population share of $S$. If the voters in $S$ cohesively prefer $T$ to the entire outcome $W$, proportionality is violated, since $S$ has disproportionally little utility from the outcome. The subtlety lies in formalizing when $S$ \emph{cohesively prefers} $T$ over $W$. We focus on three well-known variants:

    \begin{definition}\label{def:proportionality-axioms} 
Let $W \subseteq C$ be an outcome of a PB instance. The outcome $W$ satisfies:
\begin{alignat*}{2}
    &\bullet \text{ \textbf{Proportional Justified Representation (PJR)} \cite{sanchez2017proportional,los2022proportional} if} \quad & \sum_{c \in W} \max_{i \in S} u_i(c) &\geq \sum_{c \in T} \min_{i \in S} u_i(c), \\
    &\bullet \text{ \textbf{Extended Justified Representation (EJR)} \cite{aziz-jr-ejr,equalshares_fjr} if } \quad & \max_{i \in S} u_i(W) &\geq \sum_{c \in T} \min_{i \in S} u_i(c), \\
    &\bullet \text{ \textbf{Fully Justified Representation (FJR)} \cite{equalshares_fjr} if } \quad & \max_{i \in S} u_i(W) &\geq \min_{i \in S} u_i(T),
\end{alignat*} for all coalitions $S \subseteq N$ and alternative sets $T \subseteq C$ such that $\frac{\abs{S}}{n} \geq \frac{\cost(T)}{B}$.

A voting rule $\mathcal R$ satisfies PJR, EJR, FJR, if for any PB instance $\elec$, $\mathcal R(\elec)$ satisfies PJR, EJR, FJR, respectively.
\end{definition}

PJR, EJR, and FJR vary in how ``attached'' voters are to alternatives in $W$ and how ``flexible'' they are in agreeing on $T$. FJR is violated if the voter in $S$ least-satisfied with $T$ is still better off than the voter in $S$ most-satisfied with $W$. EJR removes the flexibility of the voters in supporting $T$: The joint utility of $S$ for alternatives $c\in T$ is determined by the minimum utility of a voter in $S$ for $c$. EJR is violated if the sum over those minimum utilities exceeds the largest utility of a voter in $S$ from $W$. Lastly, PJR makes voters more attached to alternatives in $W$: The joint utility of $S$ for $W$ is determined by the maximum utility of a voter in $S$ for each $c\in W$. PJR is violated if this joint (maximum) utility of $S$ for $W$ is less than their joint (minimum) utility for $T$. Thus, it is not hard to see that $(S,T)$ violating PJR also violate EJR, and $(S,T)$ violating EJR also violate FJR. Less obviously, it holds true that an outcome satisfying FJR (and thus EJR and PJR) always exists.

\begin{proposition}[\cite{equalshares_fjr,los2022proportional}]\label{prop:implications}
    If an outcome satisfies FJR, it satisfies EJR. If an outcome satisfies EJR, it satisfies PJR.
\end{proposition}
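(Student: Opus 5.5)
The plan is to argue by contraposition, pointwise for each pair $(S,T)$. I will show that any $(S,T)$ with $\frac{|S|}{n}\geq\frac{\cost(T)}{B}$ witnessing a violation of the weaker axiom also witnesses a violation of the stronger one. The constraint on $(S,T)$ is identical across all three axioms in \Cref{def:proportionality-axioms}. It therefore suffices to compare the two sides of the inequalities for a fixed admissible pair and a fixed outcome $W$.

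For FJR $\Rightarrow$ EJR, I compare the right-hand sides. For every $i\in S$ and $c\in T$ we have $\min_{j\in S}u_j(c)\le u_i(c)$. Summing over $c\in T$ gives $\sum_{c\in T}\min_{j\in S}u_j(c)\le u_i(T)$ for every $i\in S$, and hence
\[
\sum_{c\in T}\min_{j\in S}u_j(c)\;\le\;\min_{i\in S}u_i(T).
\]
The left-hand sides agree, so if $W$ satisfies FJR we get
\[
\max_{i\in S}u_i(W)\;\ge\;\min_{i\in S}u_i(T)\;\ge\;\sum_{c\in T}\min_{j\in S}u_j(c),
\]
which is exactly the EJR inequality for $(S,T)$.

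For EJR $\Rightarrow$ PJR, the right-hand sides now coincide, and I compare the left-hand sides. Let $i^*\in S$ attain $\max_{i\in S}u_i(W)$. Then
\[
\max_{i\in S}u_i(W)=\sum_{c\in W}u_{i^*}(c)\;\le\;\sum_{c\in W}\max_{i\in S}u_i(c).
\]
Combining this with the EJR inequality yields the PJR inequality for every admissible $(S,T)$.

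There is no real obstacle here. The only point needing care is the direction of the exchange of min/max with the sum: the minimum of a sum dominates the sum of the minima, and the maximum of a sum is dominated by the sum of the maxima. I also need $S$ to be nonempty so that the minima and maxima are well defined; this is implicit in the definition, and an empty $S$ forces $\cost(T)=0$, hence $T=\emptyset$ since costs are positive.
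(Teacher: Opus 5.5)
Your proof is correct and takes the same approach as the paper. The paper cites the proposition from prior work and justifies it only by the pointwise remark that any admissible $(S,T)$ violating PJR also violates EJR, and any pair violating EJR also violates FJR; your two inequalities $\sum_{c\in T}\min_{i\in S}u_i(c)\le\min_{i\in S}u_i(T)$ and $\max_{i\in S}u_i(W)\le\sum_{c\in W}\max_{i\in S}u_i(c)$ make that remark explicit.
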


\begin{proposition}[\cite{equalshares_fjr}]\label{prop:fjr-existance}
    For any PB instance, there exists an outcome satisfying FJR.
\end{proposition}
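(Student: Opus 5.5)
The plan is to prove \Cref{prop:fjr-existance} constructively via the Greedy Cohesive Rule. Start from $W=\emptyset$, $N'=N$ (the active voters), and spent budget $0$. At each step, search over all pairs $(S,T)$ with $S\subseteq N'$, $T\subseteq C\setminus W$, $\frac{\abs{S}}{n}\ge \frac{\cost(T)}{B}$, and $\cost(W\cup T)\le B$. Among these, pick the pair maximizing the cohesiveness value $\min_{i\in S} u_i(T)$, breaking ties arbitrarily. Add $T$ to $W$, remove $S$ from $N'$, and repeat until no such pair with $T\neq\emptyset$ exists. The rule terminates, since each step removes at least one voter or stops. The proof then has two parts: feasibility, and FJR.

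For feasibility, note that each selected pair satisfies $\cost(T_j)\le B\abs{S_j}/n$. The groups $S_j$ are disjoint, so
\[
\cost(W)=\sum_j \cost(T_j)\le \frac{B}{n}\sum_j\abs{S_j}\le B.
\]
This bound also shows that the side constraint $\cost(W\cup T)\le B$ is automatically satisfied for candidate pairs among the still-active voters, because $\cost(T)\le B\abs{S}/n$ and $S$ is disjoint from all previously removed groups. So that constraint never blocks a pair on its own.

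For FJR, suppose for contradiction that $(S,T)$ violates FJR for the final $W$. That is, $\frac{\abs{S}}{n}\ge\frac{\cost(T)}{B}$ and every $i\in S$ has $u_i(W)<\min_{i'\in S}u_{i'}(T)=:\alpha$.
\begin{itemize}
\item \emph{Case 1: no voter of $S$ was ever removed.} Then $S\subseteq N'$ at termination. The set $T\setminus W$ is a valid candidate: it is cheap enough, and it is nonempty, since otherwise $T\subseteq W$ would give $u_i(W)\ge u_i(T)\ge\alpha$. This contradicts termination.
\item \emph{Case 2: some voter of $S$ was removed.} Let $j$ be the first step at which some $i\in S$ is removed, as part of $S_j$ with chosen set $T_j$. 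Just before step $j$, all of $S$ is active. The pair $(S, T\setminus W_{j-1})$ was available at step $j$, so by greedy maximality $\min_{i'\in S_j}u_{i'}(T_j)\ge \min_{i'\in S}u_{i'}(T\setminus W_{j-1})$. Hence the removed voter $i$ gets $u_i(W)\ge u_i(T_j)\ge \min_{i'\in S}u_{i'}(T\setminus W_{j-1})$.
\end{itemize}

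The main obstacle is Case 2, because of the term $T\setminus W_{j-1}$: removing already-chosen items lowers the value. The fix is to compare with all of $T$, using $u_i(W)\ge u_i(W_{j-1})+u_i(T_j)$. For the particular voter $i$,
\[
u_i(W)\ge u_i(T\cap W_{j-1})+\min_{i'\in S}u_{i'}(T\setminus W_{j-1}).
\]
This does not immediately give $\ge \min_{i'} u_{i'}(T)$, because the minimizers may differ.

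The cleaner remedy is to let the greedy step range over all $T\subseteq C$, allowing overlap with $W$, and to measure value as $\min_{i\in S}u_i(T)$. Only $T\setminus W$ is added, and cost is charged as $\cost(T)$ in the constraint. The feasibility argument still works, since $\cost(T_j\setminus W)\le \cost(T_j)\le B\abs{S_j}/n$. With this change, the pair $(S,T)$ itself is available at step $j$, so $u_i(W)\ge u_i(T_j)\ge \min_{i'\in S_j}u_{i'}(T_j)\ge \alpha$. This contradicts the assumed violation. Case 1 also goes through, since $T\not\subseteq W$ there.

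It remains only to check that termination with $T$ allowed to overlap $W$ is still fine. Each selected step removes a nonempty $S$, so there are at most $n$ steps.
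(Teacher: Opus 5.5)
Your final argument is correct and follows the paper's route. The paper gets this proposition from the Greedy Cohesive Rule of Peters et al.; exactly as you found, it lets $T$ range over all of $C$ rather than $C\setminus W$ (see the footnote to the GCR definition), and uses the same ``first deactivated voter of $S$'' argument, together with your feasibility bound, that it writes out for mGCR in the appendix.
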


There are two other natural choices for formalizing when $S$ \emph{cohesively prefers} $T$ over $W$ that we consider out of scope: The most natural choice, $u_i(W) < u_i(T)$ for all $i\in S$, leads to a proportionality axiom known as the \emph{core} \cite{aziz-jr-ejr}, which may be unsatisfiable \cite{fain2018fairallocationindivisiblepublic}, even in committee elections \cite{equalshares_fjr}.\footnote{It is unknown whether the core is always satisfiable for committee elections with approval preferences.} The fourth option of putting the sum inside or outside of the max/min in \Cref{def:proportionality-axioms}, $\sum_{c \in W} \max_{i \in S} u_i(c) \geq \min_{i \in S} u_i(T)$, leads to the natural generalization of a lesser-known proportionality axiom called \emph{FPJR} \cite{kalayci2025full} from approval-based committee elections to PB.

\section{Constant Number of Voters}\label{sec:constant-n}

As observed in \cite{equalshares_fjr}, the classic Knapsack problem can be reduced to finding an outcome satisfying EJR in the setting of general utilities and general costs \textit{for just a single voter}.
The same observation applies to PJR and FJR. 

\begin{remark}\label{rem:hardness-constant-n}
    Unless $\P=\NP$, there is no polynomial-time computable voting rule satisfying PJR, EJR, or FJR for all elections with $n=1$ and general utilities and general costs.
\end{remark}

Yet this reduction from Knapsack no longer works as-is if the utilities are guaranteed to be polynomial in the instance size.
This suggests that we may be able to avoid this computational barrier by restricting the utilities. In this section, we consider the computational complexity of finding PJR, EJR, and FJR committees in all PB instances $\elec = (N, C, B, \cost, (u_i)_{i \in N})$ such that $n \in O(1)$ \textit{and} additional constraints on $(u_i)_{i \in N}$ and $\cost$ are obeyed. 

Let $u_{max} := \max_{i \in N, c \in C} u_i(c)$ and $c_{max} := \max_{c \in C} \cost(c)$. We say that utilities are \emph{polynomial} if $u_{max}$ is bounded by a polynomial in $\abs{\elec}$. Since $\abs{\elec} = \poly{n, m, \log(c_{max}), \log(u_{max})}$\footnote{We assume that $B < \cost(C) \leq m \cdot c_{max}$ because otherwise all projects can trivially be selected. Hence $B$ is representable in $\log(m \cdot c_{max})$ space.}, this is equivalent to  $u_{max}$ being bounded by a polynomial in $n$, $m$, and $\log\paran{c_{max}}$. Analogously, we say that costs are polynomial if $c_{max}\in\poly{\abs{\elec}}$. For a problem to be in $\P$, it must be computable in time $\poly{|\elec|} = \poly{n, m, \log(c_{max}), \log(u_{max})}$.
Our results are summarized in \Cref{tab:constant-n-results}. 

\subsection{Greedy Cohesive Rule and Efficient Outcome Computation}

Indeed, we find that when the number of voters is constant and utilities are polynomial, the Greedy Cohesive Rule introduced in \cite{equalshares_fjr} and shown to always find an outcome satisfying FJR can be efficiently implemented.
\begin{definition}[Greedy Cohesive Rule (GCR) \cite{equalshares_fjr}]\label{def:gcr}
     Initially mark all voters as active and set the outcome ($W$) to empty. Find a tuple $(S, T, \beta)$ maximizing $\beta$ such that $S$ is a subset of the active voters, $T \subseteq C,$\footnote{%
    Differing from \cite{equalshares_fjr}, we consider $T \subseteq C$ instead of $T \subseteq C\setminus W$. We confirmed with the authors of \cite{equalshares_fjr} that this is necessary for GCR to satisfy FJR.
    } and $ \beta \in \mathbb{N}^+$ such that $|S|/n \geq \cost(T)/B$ and $\beta = \min_{i \in S}u_i(T)$. If no such tuple exists, return $W$. Otherwise, add $T$ to $W$, mark voters in $S$ as inactive, and repeat.
\end{definition}

The key idea is that if $2^n$ is not prohibitively large, e.g. when $n \in O(1)$, then if there is an efficient way of computing the $\beta$-maximizing $(S, T, \beta)$ for a fixed $S$, we can precompute the $\beta$-maximizing $(S, T, \beta)$ for \textit{every} fixed $S$. We can then sort the list of tuples by $\beta$ in $O(n2^n \cdot \log(m \cdot u_{max}))$ time, and execute GCR by moving through the list until a tuple is reached that corresponds to an $S$ for which all voters are currently active, an $O(n2^n)$ operation. 

\begin{restatable}{lemma}{gcrruntime}\label{lem:gcr-constant-n-runtime}
    GCR can be run in $O(n2^n \cdot \log(m \cdot u_{max}) \cdot m^{n+1} \cdot u_{max}^{n} \cdot \log(m \cdot c_{max}))$.
\end{restatable}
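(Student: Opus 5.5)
The plan is to follow the outline given just before the statement. First, for each fixed coalition $S\subseteq N$, compute a $\beta$-maximizing $T$ by dynamic programming. Then sort the resulting $2^n$ tuples by $\beta$ and simulate GCR on the sorted list.

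\textbf{DP for a fixed $S$.} Write $S=\{i_1,\dots,i_s\}$ with $s\le n$. The budget for $T$ is $\lfloor |S|B/n\rfloor$. Process the alternatives $c_1,\dots,c_m$ in order. The DP state is a utility vector $(v_1,\dots,v_s)$ with each $v_j\in\{0,\dots,m\,u_{max}\}$, recording the utility that voter $i_j$ gets from the chosen subset of $\{c_1,\dots,c_\ell\}$. For each $\ell$ and each state we store $D_\ell(v)$, the minimum cost of a subset of $\{c_1,\dots,c_\ell\}$ realizing exactly the vector $v$, or $\infty$ if none exists. The transition is
\[
D_\ell(v)=\min\bigl(D_{\ell-1}(v),\ \cost(c_\ell)+D_{\ell-1}(v-u_S(c_\ell))\bigr),
\]
where $u_S(c_\ell)=(u_{i_1}(c_\ell),\dots,u_{i_s}(c_\ell))$. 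Each transition is an addition and comparison of numbers of size $O(\log(m\,c_{max}))$ bits, and there are at most $m\cdot(m\,u_{max}+1)^{n}$ table entries. Hence the DP for one $S$ runs in $O(m^{n+1}u_{max}^{n}\log(m\,c_{max}))$ time, up to absorbing constants like $(1+1/(m u_{max}))^n$, which is $O(1)$ for the factor bound. Afterwards, scan all final vectors $v$ with $D_m(v)\le |S|B/n$ and pick one maximizing $\min_j v_j$. This gives $\beta_S$ and, by storing back-pointers or re-running the DP, a witness $T_S$. Correctness is immediate because every subset $T$ corresponds to exactly one path in the DP, and any subset realizing $v$ with minimum cost is an equally valid witness.

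\textbf{Simulating GCR.} Observe that the maximizing $\beta$ for a fixed $S$ does not depend on the current outcome $W$. This holds because the definition uses $T\subseteq C$ rather than $T\subseteq C\setminus W$, which is exactly the convention in \Cref{def:gcr}. Consequently, GCR at each round picks, among the $S$ consisting only of active voters, one with the largest $\beta_S$, provided $\beta_S\ge 1$. Precompute $(S,T_S,\beta_S)$ for all $2^n$ coalitions and sort them by $\beta_S$. Each comparison costs $O(\log(m\,u_{max}))$, so sorting takes $O(n2^n\log(m\,u_{max}))$ time. Then walk down the list, accepting a tuple whenever all voters in $S$ are still active and $\beta_S\ge1$, and deactivating those voters. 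One pass suffices: once a coalition is rejected because it contains an inactive voter, it stays infeasible, and the accepted tuples are processed in nonincreasing order of $\beta$, just as GCR does. This pass costs $O(n2^n)$ plus the cost of adding the sets $T$ to $W$.

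\textbf{Total and main obstacle.} Summing the $2^n$ DPs with the sorting and pass gives a bound within the stated product $O(n2^n\log(m\,u_{max})\cdot m^{n+1}u_{max}^n\log(m\,c_{max}))$, which is generous. The main point needing care is the claim that the precomputed ranking remains valid throughout the execution. This relies on $\beta_S$ being independent of $W$ and of which other voters are active. It is also where the footnote convention $T\subseteq C$ is essential. Tie-breaking among equal $\beta$ is arbitrary, as it is in GCR itself.
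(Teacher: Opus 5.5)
Your proposal is correct and follows the paper's proof essentially step for step. The shared steps are the per-coalition dynamic program over exact utility vectors storing the minimum cost, the scan of the last row for the largest $\min_j v_j$ within budget $|S|B/n$, the sort of the $2^n$ precomputed tuples by $\beta$, and a single pass that skips coalitions containing inactive voters. Your explicit remark that $\beta_S$ is independent of $W$ because of the convention $T\subseteq C$ is a point the paper leaves implicit. Treating $(1+1/(m\,u_{max}))^n$ as $O(1)$ is, like the paper's own step $(m u_{max}+1)^{|S|}\in O(m^{n}u_{max}^{n})$, only justified when $n$ is bounded (e.g.\ $n\in O(1)$). That is exactly the regime in which the lemma is used.
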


\begin{proof}
    The $\beta$-maximizing $(S, T, \beta)$ tuple for each $S \subseteq N$ is found using a DP algorithm. We provide detailed pseudocode and a proof of the DP correctness (\Cref{prop:dp-gcr-correct}) in \Cref{app:gcr-constant-n-runtime}. Fix some enumeration of both the voters in $S$ and the alternatives in $C$. The DP table has entries $A[j][x_1, \dots, x_{|S|}]$ for $j \in \{0, \dots, m\}$ and $x_1, \dots, x_{|S|} \in \{0, \dots, m \cdot u_{max}\}$, for a total of $(m+1) \times (m \cdot u_{max} + 1)^{|S|} \in O(m^{n+1} \cdot u_{max}^n)$ cells. The cell $A[j][x_1, \dots, x_{|S|}]$ stores the minimum cost of a subset of the first $j$ alternatives $T \subseteq \{c_1, \dots, c_j\}$ that achieves utility exactly $x_i$ for the $i$\textsuperscript{th} voter in $S$ for all $i \in \{1, \dots, |S|\}$. Each cell is computed with a constant number of operations on costs of sets of alternatives, which can be done in $O(\log(m \cdot c_{max}))$ time. The corresponding set of alternatives for each cell can be stored in a companion table with matching entries. The final tuple is chosen by scanning over the $m$\textsuperscript{th} row of $A$ and picking the cell guaranteeing the highest minimum utility for members of $S$ such that the cell entry is within the budget ``allotted'' to $S$.\qed
\end{proof}

 As shown in \cite[Prop. 3]{equalshares_fjr}, GCR satisfies FJR. For $\elec$ with $u_{max} \in \text{poly}(n, m, \log(c_{max}) )$ and $n \in O(1)$, \Cref{lem:gcr-constant-n-runtime} implies that we can find an outcome for $\elec$ satisfying FJR in time $\poly{m, \log(c_{max})}$. Hence, the runtime is in $\poly{|\elec|}$.

\begin{theorem}\label{thm:fjr-constant-n}
    For all $\elec$ with $n \in O(1)$, polynomial utilities, and general costs, an outcome satisfying FJR can be found in polynomial time.
\end{theorem}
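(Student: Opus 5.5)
The plan is to run GCR itself, using \Cref{lem:gcr-constant-n-runtime} for efficiency and \cite[Prop. 3]{equalshares_fjr} for correctness. Since GCR always returns an outcome satisfying FJR, correctness needs nothing new. Two small points must still be checked: GCR's output must be feasible, and our modification $T\subseteq C$ (rather than $T \subseteq C\setminus W$) must not break the FJR argument. For feasibility, the sets $S$ chosen across rounds are disjoint, so the total cost added is at most $\sum |S|B/n \le B$. Re-adding alternatives already in $W$ only lowers the true cost of $W$. FJR under the modification is exactly the convention confirmed in the footnote to \Cref{def:gcr}, so we invoke \cite[Prop. 3]{equalshares_fjr} for it.

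Everything else is a runtime calculation. \Cref{lem:gcr-constant-n-runtime} bounds the time by
$O(n2^n \log(m u_{max}) \cdot m^{n+1} u_{max}^n \cdot \log(m c_{max}))$.
With $n\in O(1)$, the factors $n2^n$ are constants. The exponents $n+1$ and $n$ on $m$ and $u_{max}$ are also constants, so $m^{n+1}$ is polynomial in $m$. By the polynomial-utilities assumption, $u_{max}\le \poly{n,m,\log c_{max}}$, so $u_{max}^n$ is polynomial in $|\elec|$. The remaining logarithmic factors, $\log(m u_{max})$ and $\log(m c_{max})$, are polynomial in the encoding length, because $\log c_{max}\le|\elec|$. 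Hence the product is $\poly{|\elec|}$, and there is no dependence on $c_{max}$ beyond $\log c_{max}$. This is why general costs are allowed.

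The step that carries the real weight is the correctness of the dynamic program inside \Cref{lem:gcr-constant-n-runtime}. For every fixed $S$, it must find a $\beta$-maximizing $T$ with $\cost(T)\le |S|B/n$. Because $DP$ cells index exact utility vectors and store minimum cost, the optimal feasible $T$ always appears in some cell. This is deferred to \Cref{prop:dp-gcr-correct}, which we take as given.

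The one remaining check is that the GCR loop, which moves down the list of tuples sorted by $\beta$ and skips those whose $S$ contains inactive voters, reproduces GCR's choices. GCR may pick any maximizer of $\beta$ over subsets of active voters. Precomputing the best $T$ for every $S$ once is legitimate here: the best $T$ for a given $S$ does not depend on $W$, since $T$ ranges over all of $C$. Therefore the first tuple in the list whose $S$ is fully active is a valid GCR step. There are at most $n$ rounds, so the loop adds only a constant factor.
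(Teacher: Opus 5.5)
Your proposal is correct and takes the paper's route. You cite \cite[Prop. 3]{equalshares_fjr} for FJR and plug $n\in O(1)$ and polynomially bounded $u_{max}$ into the runtime bound of \Cref{lem:gcr-constant-n-runtime}. Your extra checks make explicit what the paper leaves implicit: feasibility via disjointness of the deactivated coalitions, and the legitimacy of precomputing one $\beta$-maximizing $T$ per $S$ because $T$ ranges over all of $C$.
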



\Cref{thm:fjr-constant-n} tells us that limiting utilities to be polynomial instead of general allows us to sidestep the computational intractability of FJR, EJR, and PJR in the case of $n \in O(1)$. Does the parallel approach of limiting costs to be polynomial afford us the same opportunity for computational efficiency? We find that the answer is \textit{partially}. We first give a modification of GCR satisfying EJR (and thus PJR) that can be implemented in polynomial time in this setting, but that stops short of satisfying FJR. Then, we show in \Cref{subsec:fjr-hard-constant-n} that \textit{no} polynomial-time computable voting rule $\mathcal{R}$ can guarantee an FJR-satisfying outcome for all elections, even when restricting to unit costs and two voters.

\begin{definition}[modified Greedy Cohesive Rule (mGCR)]
    The modified Greedy Cohesive Rule is identical to GCR (\Cref{def:gcr}), but the definition of $\beta$ is changed from $\min_{i \in S} u_i(T) = \min_{i \in S} \sum_{c \in T} u_i(c)$ to $\sum_{c \in T} \min_{i \in S} u_i(c)$.
\end{definition}

 The change in the definition of $\beta$ corresponds to a strengthening of the cohesiveness requirement put on coalitions. As a result, mGCR enforces EJR compliance in the same way that GCR effectively ``hardcodes'' FJR compliance.

\begin{restatable}{lemma}{mgcrejrsat}\label{lem:mgcr-ejr-sat}
mGCR satisfies EJR.
\end{restatable}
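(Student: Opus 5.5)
We argue by contradiction, mirroring the proof that GCR satisfies FJR \cite[Prop. 3]{equalshares_fjr}. First we note that mGCR always returns a feasible outcome. In each round the selected tuple $(S,T,\beta)$ satisfies $\cost(T)\le B\abs{S}/n$, and the coalitions selected across rounds are pairwise disjoint, because voters become inactive. Hence the total cost of $W$ is at most $B\sum\abs{S}/n\le B$; since $T$ may overlap $W$, the true cost of $W$ can only be smaller.

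Now let $W$ be the output and suppose some pair $(S,T)$ with $\abs{S}/n\ge \cost(T)/B$ violates EJR. That is, $\max_{i\in S}u_i(W) < \sum_{c\in T}\min_{i\in S}u_i(c)=:\beta^*$. In particular $\beta^*\ge 1$, so $\beta^*\in\N^+$. When mGCR terminates, no tuple with $\beta\ge1$ exists among the active voters, and $(S,T,\beta^*)$ would be such a tuple. So not all voters of $S$ are active at termination. Consider the first round $r$ in which some voter $i\in S$ becomes inactive, and let $(S_r,T_r,\beta_r)$ be the tuple chosen in that round, so $i\in S_r$. Just before round $r$ all voters of $S$ were active, so $(S,T,\beta^*)$ was an admissible candidate. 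By the maximality of the chosen tuple, $\beta_r\ge\beta^*$.

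The key step is to convert $\beta_r$ into utility for voter $i$. Since $i\in S_r$, we have $u_i(T_r)=\sum_{c\in T_r}u_i(c)\ge\sum_{c\in T_r}\min_{j\in S_r}u_j(c)=\beta_r$. Since $T_r\subseteq W$ and utilities are additive and nonnegative, $u_i(W)\ge u_i(T_r)\ge\beta_r\ge\beta^*$. This contradicts $\max_{j\in S}u_j(W)<\beta^*$. Therefore $W$ satisfies EJR.

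The point needing care is the bookkeeping around overlaps. Because $T_r$ may intersect the current $W$ (per the footnote to \Cref{def:gcr}), adding $T_r$ means taking a union. The argument above only uses $T_r\subseteq W$ together with monotonicity of $u_i$, so overlaps cause no problem. The other point is the choice of the first round in which a member of $S$ leaves, which is what guarantees that $(S,T,\beta^*)$ was still a valid competitor at that moment. I expect no serious obstacle beyond stating these two points cleanly.
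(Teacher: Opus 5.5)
Your proof is correct and follows essentially the same route as the paper. The paper also uses the same disjoint-coalitions budget argument for feasibility. It then takes the first voter of $S$ to be deactivated, notes that $(S,T,\beta)$ was still admissible in that round so $\beta'\geq\beta$, and chains $\beta'\leq u_d(T')\leq u_d(W)<\beta$. Your explicit check that $\beta^*\in\N^+$, so that the violating pair is a legitimate mGCR candidate, is a small detail the paper leaves implicit.
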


We defer the proof of \Cref{lem:mgcr-ejr-sat} to \Cref{app:mgcr-ejr-sat}, because it closely hews to the proof that GCR satisfies FJR in \cite{equalshares_fjr}. 

\begin{restatable}{lemma}{mgcrruntime}\label{lem:mgcr-runtime}
    mGCR can be implemented in time $O(n2^n \cdot m^2 \cdot c_{max} \cdot \log(m \cdot u_{max}))$.
\end{restatable}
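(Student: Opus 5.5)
The plan mirrors the structure used for \Cref{lem:gcr-constant-n-runtime}. The key observation is that for a fixed coalition $S$, the mGCR objective $\beta = \sum_{c\in T}\min_{i\in S} u_i(c)$ is a \emph{single} additive function of $T$. Define the weight $w_S(c) := \min_{i\in S} u_i(c)$. Finding the $\beta$-maximizing tuple $(S,T,\beta)$ for fixed $S$ then reduces to a 0/1 Knapsack instance with item values $w_S(c)$, item costs $\cost(c)$, and capacity $B_S := \lfloor |S| B / n \rfloor$. The constraint $|S|/n \geq \cost(T)/B$ is equivalent to $\cost(T) \leq |S|B/n$, and since costs are integers this is the same as $\cost(T)\le B_S$.

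Because costs (not utilities) are now the bounded quantity, I will use the cost-indexed Knapsack DP. The table has entries $A[j][x]$ for $j\in\{0,\dots,m\}$ and $x\in\{0,\dots,\min(B_S, m\cdot c_{max})\}$. The entry $A[j][x]$ stores the maximum total weight $w_S(T)$ over $T\subseteq\{c_1,\dots,c_j\}$ with $\cost(T)\le x$. The recurrence is
\[A[j][x]=\max\bigl(A[j-1][x],\; A[j-1][x-\cost(c_j)]+w_S(c_j)\bigr),\]
where the second option is taken only if $\cost(c_j)\le x$. A companion table records the chosen set. Correctness is the standard Knapsack induction on $j$.

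The table has $O(m\cdot m c_{max}) = O(m^2 c_{max})$ cells, using the assumption $B<\cost(C)\le m c_{max}$ from the footnote. Each cell needs one addition and one comparison of numbers of magnitude at most $m\,u_{max}$, which costs $O(\log(m u_{max}))$ time. Precomputing all $w_S(c)$ costs $O(nm)$ comparisons per $S$, which is dominated. So each $S$ takes $O(m^2 c_{max}\log(m u_{max}))$ time, and over all $2^n$ coalitions this gives $O(2^n m^2 c_{max}\log(m u_{max}))$.

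The remaining step is to simulate the greedy loop, and this is the point that needs care. I must justify that precomputing one optimal tuple per $S$ suffices, even though mGCR re-optimizes in every round. Two facts make this work. First, $T$ ranges over all of $C$, not over $C\setminus W$ (see the footnote to \Cref{def:gcr}). Second, $\beta$ depends only on $S$ and $T$, not on $W$. Hence the best $\beta$ for a fixed $S$ never changes over the run. In each round, mGCR picks the maximum of these fixed values over coalitions $S$ consisting entirely of active voters.

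So I will sort the $2^n$ tuples by $\beta$ in $O(n2^n\log(m u_{max}))$ time; I read the factor $n$ as $\log 2^n$ from sorting. I then scan the sorted list once. At each tuple, I check in $O(n)$ time whether $S$ is entirely active. If so, I add $T$ to $W$ and deactivate $S$. I stop once $\beta=0$, since the definition requires $\beta\in\mathbb{N}^+$. Each coalition's tuple only needs to be considered once, because activity of voters is monotone decreasing: a coalition that contains an inactive voter can never become eligible again.

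Summing the costs of the DP and the scan gives the claimed bound $O(n2^n\cdot m^2 \cdot c_{max}\cdot\log(m u_{max}))$.
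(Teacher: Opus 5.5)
Your proposal is correct and follows essentially the same route as the paper: a per-coalition, cost-indexed Knapsack DP on the weights $\min_{i\in S}u_i(c)$ with capacity $\lfloor B|S|/n\rfloor$ and $O(m^2c_{max})$ cells, repeated over all $2^n$ coalitions, combined with the sort-and-scan simulation of the greedy loop. The paper only invokes that simulation ``by the same argument'' as for GCR, whereas you justify it explicitly. One minor point: the $O(nm)$ per-coalition cost of computing the weights is not dominated by $O(m^2c_{max}\log(mu_{max}))$ when $n$ is large relative to $mc_{max}$, but it is absorbed by the factor $n$ in the stated bound, which is how the paper accounts for it.
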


\begin{proof}
    By the same argument as in \Cref{thm:fjr-constant-n}, if we can efficiently precompute the $(S, T, \beta)$ that maximizes $\beta$ for every $S \subseteq N$,  then the entire algorithm can be implemented to run in polynomial time. We can compute all relevant tuples $(S, T, \beta)$ as follows: for a given $S$, we assign all $c \in C$ the value $v_S(c) := \min_{i \in S} u_i(c)$, which takes $O(n \cdot m \cdot \log(u_{max}))$ time in total. We can then run the standard DP for Knapsack: we want the set $T \subseteq C$ that maximizes $\sum_{c \in T} v_S(c)$ subject to a maximum budget of $\lfloor B \cdot |S|/n \rfloor$ (which is the budget deserved by coalition $S$). 
    This DP subroutine utilizes an $(m+1) \times (\lfloor B \cdot |S|/n \rfloor + 1)$ table, where the cell at index $(j, b)$ stores the maximum value of $\sum_{c \in T} v_S(c)$ over all $T \subseteq \{c_1, \dots, c_{j}\}$. As $B \leq m \cdot c_{max}$, there are $O(m^2 \cdot c_{max})$ cells in the table, and each cell is filled using a constant number of operations on values that are at most $m \cdot u_{max}$. Hence, for every $S$ the total run time is $O(n \cdot m \cdot \log(u_{max})) + O(m^2 \cdot c_{max} \cdot \log(m \cdot u_{max}))$, which is in $O(n \cdot m^2 \cdot c_{max} \cdot \log(m \cdot u_{max}))$. Repeating this for the $2^n$ choices of $S$ yields the stated claim. \qed
\end{proof}

If we have $\elec$ such that $n \in O(\log m)$ and $c_{max} \in \poly{n, m, \log(u_{max})}$, then by combining \Cref{lem:mgcr-ejr-sat} and \Cref{lem:mgcr-runtime}, we know that running mGCR produces an outcome satisfying EJR in time $\text{poly}(m, \log(m \cdot u_{max}))$ and hence polynomial in $|\elec|$.

\begin{theorem}\label{thm:mgcr-constant-n}
    An outcome satisfying EJR can be computed in polynomial time for $n \in O(\log(m))$, general utilities, and polynomial costs. 
\end{theorem}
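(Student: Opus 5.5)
The plan is to combine \Cref{lem:mgcr-ejr-sat} with \Cref{lem:mgcr-runtime}. We then verify that, under the stated parameter regime, the running time $O(n2^n \cdot m^2 \cdot c_{max} \cdot \log(m \cdot u_{max}))$ is bounded by a polynomial in $\abs{\elec} = \poly{n, m, \log(c_{max}), \log(u_{max})}$.

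Correctness is immediate from \Cref{lem:mgcr-ejr-sat}: mGCR always returns an outcome satisfying EJR. Feasibility also holds, because the selected coalitions $S$ are pairwise disjoint. Each accepted $T$ satisfies $\cost(T) \le B\abs{S}/n$, so summing over the rounds gives $\cost(W) \le B$. This is the same argument as for GCR.

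For the runtime we bound each factor in turn.
\begin{itemize}
\item Since $n \le c \log m$ for some constant $c$, we get $2^n \le m^{c}$, which is polynomial in $m$.
\item The factor $n$ is at most $\abs{\elec}$.
\item Polynomial costs means $c_{max} \in \poly{\abs{\elec}}$.
\item $\log(m \cdot u_{max}) = \log m + \log u_{max}$ is linear in the input size, because utilities are written in binary.
\end{itemize}
Multiplying these bounds shows the total time is $\poly{\abs{\elec}}$.

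Some care is needed about what the runtime of \Cref{lem:mgcr-runtime} covers. It accounts for the precomputation over all $2^n$ coalitions. Beyond that, the greedy phase sorts the $2^n$ tuples by $\beta$ and scans them; the scan takes $O(n 2^n)$ time and may include an active-voter check per candidate. Both of these costs are dominated by the precomputation term.

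I also need to justify that precomputing once suffices. When voters become inactive, the best $T$ for a fixed $S$ does not change, because $T$ ranges over all of $C$ rather than $C \setminus W$. So the optimal tuple for every still-admissible $S$ is the precomputed one, and scanning the sorted list in decreasing $\beta$ reproduces mGCR exactly. The loop runs at most $n$ rounds, since each round deactivates at least one voter (we have $\beta \ge 1$, so $S$ is nonempty).

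The only mildly delicate point is this interchange between the static precomputation and the dynamic set of active voters. Once the $T \subseteq C$ convention is noted, the theorem follows directly.
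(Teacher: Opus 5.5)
Your proposal is correct and matches the paper's proof: combine \Cref{lem:mgcr-ejr-sat} with \Cref{lem:mgcr-runtime}, then note that $n\in O(\log m)$ makes $2^n$ polynomial in $m$ and that polynomial costs bound $c_{max}$. Your extra remarks, that one static precomputation suffices because $T$ ranges over all of $C$ and that the sorting and scanning are dominated by the precomputation, only spell out what the paper delegates to the argument behind \Cref{thm:fjr-constant-n}.
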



There are two natural follow-up lines of questioning that we discuss in the Appendix.  
In \Cref{app:bounded-voter-types}, we extend \Cref{thm:fjr-constant-n} and \Cref{thm:mgcr-constant-n} to elections with a general number of voters but with few distinct utility functions --- bounding the number of voter \textit{types} instead of the number of voters. In \Cref{app:constant-budget}, we show that if we constrain the budget to be constant instead of constraining the number of voters to be constant, then there exists a polynomial-time voting rule that satisfies FJR. 

\subsection{FJR is Weakly NP-Hard}\label{subsec:fjr-hard-constant-n}

Our positive result in \Cref{thm:mgcr-constant-n} turns out to be tight: Unless $\textbf{P}=\textbf{NP}$, there is no polynomial-time voting rule that satisfies FJR for all PB instances with constant voters, general utilities, and polynomial costs. In fact, this holds under the even more severe instance restriction of unit-costs and $n=2$.

\begin{restatable}{theorem}{fjrweakhardness}\label{thm:fjr-weak-hardness}
    Unless $\P=\NP$, there exists no polynomial-time computable voting rule $\mathcal{R}$ satisfying FJR for all elections with $n=2$, general utilities, and unit costs.
\end{restatable}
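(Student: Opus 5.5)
We reduce from \textsc{Equal-Cardinality Partition}, which is weakly \textbf{NP}-hard. An instance consists of $2m$ positive integers $a_1,\dots,a_{2m}$ with $\sum_j a_j = 2A$, and asks whether some $m$-subset has sum exactly $A$. From such an instance we build an election with $n=2$, unit costs and general (binary-encoded) utilities. The goal is that, whenever a partition exists, \emph{every} FJR outcome encodes one. Given a polynomial-time rule $\mathcal R$ satisfying FJR, we would run it, check in polynomial time whether its output yields an $m$-subset of sum $A$, and answer accordingly. So $\mathcal R$ would decide the problem, implying $\P=\NP$. Existence of FJR outcomes (\Cref{prop:fjr-existance}) is not needed for correctness, only for the fact that $\mathcal R$ always outputs something.

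\textbf{Construction.} Create one alternative $c_j$ per number, with complementary utilities
\[
u_1(c_j)=L+a_j, \qquad u_2(c_j)=L+\tfrac{2A}{m}-a_j,
\]
scaled by $m$ to make them integral, where $L$ is a large padding constant. For every size-$m$ set $W$ of these alternatives, $u_1(W)+u_2(W)=2mL+2A$ is then fixed, and $u_1(W)=mL+a(W)$. Take committee size $k=m$. The grand coalition $S=\{1,2\}$ may demand any $T$ with $|T|\le m$, and each singleton may demand any $T$ with $|T|\le m/2$ (assume $m$ even, which is w.l.o.g. by padding with zeros).

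\textbf{Calibration of the singleton constraints.} The aim is to calibrate so that the singleton FJR constraints read $u_1(W)\ge mL+A$ and $u_2(W)\ge mL+A$. Combined with the fixed sum, these force $a(W)=A$ exactly. To achieve this, add for each voter $i$ a small block of private alternatives, valued only by $i$, whose best $m/2$-bundle has $u_i$-value exactly $mL+A$. We must also ensure that the private alternatives are \emph{worse per seat} than the $c_j$'s for the purpose of building $W$, so that any FJR outcome is forced to consist (essentially) of $m$ of the $c_j$'s.

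\textbf{Two directions of the argument.} Next we check that when a partition $P$ exists, the outcome $W=\{c_j:j\in P\}$ satisfies FJR. For $S=\{1,2\}$, any $T$ gives $\min(u_1(T),u_2(T))\le mL+A$ by the fixed-sum property (plus a check that private items do not help $T$). The singleton constraints hold by calibration. Conversely, we show that any FJR outcome that deviates from this shape violates one of the two singleton constraints.

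\textbf{Main obstacle.} The main obstacle is exactly this tension in the private-item calibration. Private items must be attractive enough for a lone voter's $m/2$-seat demand to reach $mL+A$ (about $2L$ per seat). Yet they must not tempt an outcome that mixes them in, which would let $W$ satisfy both thresholds without balancing $a(W)$. If the simple version fails, I would first try replacing the private items by a few extremely high-utility alternatives per voter and increasing $k$ by the corresponding number of seats. Then both such items are forced into every FJR outcome (each singleton could otherwise demand it), and the remaining $m$ seats must be filled by $c_j$'s under the tight two-sided constraint. The threshold offsets would be adjusted so that the singleton demands still pin $a(W)=A$. Verifying that no mixed $T$ for the grand coalition breaks the ``yes'' direction is the routine but delicate final check.
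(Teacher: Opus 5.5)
There is a genuine gap, and it sits exactly at the calibration step you flag as the main obstacle. Your two requirements are in fact incompatible. With $n=2$ and unit costs, FJR for $W$ amounts to two conditions: $u_i(W)\ge D_i$ for $i=1,2$, where $D_i$ is the value of voter $i$'s best bundle of $\lfloor k/2\rfloor$ alternatives, and $\max(u_1(W),u_2(W))\ge M:=\max_{|T|\le k}\min(u_1(T),u_2(T))$.

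Let $U$ be the union of the two singleton-optimal bundles. Then $|U|\le k$ and $u_i(U)\ge D_i$, so $U$ always passes both singleton tests. Suppose you calibrate so that $D_1=D_2=mL+A$. Your own yes-direction claim (the partition outcome $W_P$ is FJR) then requires $M\le mL+A$. Hence $\max(u_1(U),u_2(U))\ge mL+A\ge M$, and $U$ is FJR too. In your construction $U$ is just the two private bundles, so it encodes no partition and is computable greedily.

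More generally, the set of FJR outcomes is upward closed in $(u_1(W),u_2(W))$. Moreover, $U$ dominates any outcome at which both singleton thresholds are exactly tight. So no design of the private items rescues the goal ``every FJR outcome encodes a partition'' via exact pinning.

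Your fallback does not escape this either. Forced high-utility items appear both in $W$ and in each singleton's demand, so they cancel. That leaves each singleton only $m/2$ seats of $c_j$'s against the $m$ seats in $W$, so its threshold is slack by about $(m/2)L$ and pins nothing. Making it tight again needs items worth roughly twice as much per seat. That brings back all-private FJR outcomes, or mixed ones that only constrain a set of $c_j$'s of a different cardinality.

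The missing idea is to reverse the roles, which is what the paper does. Rather than a search reduction in which YES instances force certificates, it uses a decision reduction in which NO instances have a unique, known FJR committee, namely $U$ itself. It reduces from \textsc{Restricted-Knapsack}, with $u_1(c_j)=v_j$, $u_2(c_j)=v-w_j$ and $k=2s$. The seeded items give each voter $s$ selfish alternatives of value $v$, so the selfish committee $W_0$ gives both voters exactly $vs$.

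\begin{itemize}
\item In YES instances, a Knapsack solution is a size-$k$ demand of the grand coalition with both utilities at least $vs+1$, so $W_0$ fails FJR.
\item In NO instances, the promise is that no $2s$-subset other than $[2s]$ has $\sum v_j=vs$ or $\sum w_j=vs$. This forces every other committee to leave some voter at utility at most $vs-1$, below that voter's selfish demand $vs$. So $W_0$ is the unique FJR committee.
\end{itemize}

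The rule thus decides the instance by whether its output equals $W_0$. The singleton demands certify NO and the grand coalition detects YES. The exact-sum promise, obtained via heavy, filler and threshold items and a parity argument, supplies the strict gaps your calibration was trying to create.

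A minor point: padding with two zeros does not preserve \textsc{Equal-Cardinality Partition}, since a solution may then use both or neither zero. Padding with two copies of a number larger than $A$ does preserve it.
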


Similarly to the \textbf{NP}-hardness proof of \Cref{rem:hardness-constant-n} for $n=1$, we reduce from the Knapsack problem. However, in the unit-cost setting, finding the optimal Knapsack for a single voter can be done in linear time. To circumvent this, we insert the Knapsack problem into the problem of finding a committee satisfying FJR by using the utilities of one voter as the values and relating the utilities of the other voter to the costs. 

In the PB instance we construct, each voter has $k/2$ \emph{selfish} alternatives that give no utility to the other voter. The items from the Knapsack problem correspond to alternatives that give the first voter utility corresponding to the value and the second voter utility corresponding to a large constant minus the cost. We show that if there exists a solution to the Knapsack instance, the two voters can jointly demand the corresponding alternatives so that the committee of the $k$ selfish alternatives violates FJR. At the same time, if no solution to the Knapsack instance exists, we show that the committee of the $k$ selfish alternatives is the unique committee satisfying FJR. Thus, if and only if the voting rule returns this selfish committee, no solution to the Knapsack instance exists. We defer the formal proof to \Cref{app:fjr-weak-hardness}.

Together, \Cref{thm:fjr-constant-n,thm:mgcr-constant-n,thm:fjr-weak-hardness} in combination with \Cref{prop:implications} uncover a complete picture of the computational complexity for finding outcomes satisfying PJR, EJR, and FJR  for a constant number of voters.

\section{FJR Hardness}\label{sec:fjr-hardness}

While it is known that \emph{verifying} whether a given outcome satisfies PJR/EJR/FJR for an instance $\elec$ is \textbf{coNP}-complete even for committee elections with approval preferences \cite{aziz2018complexity,aziz-jr-ejr,kalayci2025full}, only very little is known about the complexity of \emph{finding some} outcome satisfying PJR/EJR/FJR. In particular, to the best of our knowledge, the only existing hardness result in this realm is that for general utilities and general costs, even for $n=1$ finding a PJR committee solves Knapsack (\Cref{rem:hardness-constant-n}). 
In this section, we give a hardness result for FJR for committee elections with additive utilities from a fixed range that does not rely on the size of the numbers in the instance, thus showing strong \textbf{NP}-hardness. 

\begin{theorem}\label{thm:fjr-strong-hardness}
    Unless $\P=\NP$, there exists no polynomial-time computable voting rule satisfying FJR, even for committee election instances where $u_i\colon C\to \set{0,1,...,5}$ for all $i\in N$.
\end{theorem}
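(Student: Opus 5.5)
The plan is to mirror the reduction behind \Cref{thm:fjr-weak-hardness}, but to start from a \emph{strongly} \textbf{NP}-hard problem so that every number in the constructed instance is a small constant. I would reduce from Exact Cover by 3-Sets (X3C): a universe $X$ with $|X|=3q$, a family $\mathcal{F}$ of 3-subsets, and the question whether some $q$ members of $\mathcal{F}$ partition $X$. The value of FJR over EJR here is that it allows a coalition to pool \emph{different} alternatives for different members, since $\min_{i\in S}u_i(T)$ has the sum inside the minimum. This flexibility is exactly what lets a coalition certify a combinatorial cover, and it is also why the verification problem is hard.

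\textbf{Construction.} I would build a committee election with three kinds of voters and alternatives.
\begin{itemize}
\item \emph{Element voters.} There is one voter $v_x$ for each $x\in X$, possibly replicated a constant number of times.
\item \emph{Set alternatives.} There is one alternative $a_F$ for each $F\in\mathcal{F}$. It gives a moderate utility to each $v_x$ with $x\in F$ and $0$ to all other voters.
\item \emph{Selfish structure.} A block of selfish alternatives and padding voters fixes a canonical committee $W_0$, as in \Cref{thm:fjr-weak-hardness}. In $W_0$, every element voter gets utility from selfish alternatives sitting just below a threshold $\tau$.
\end{itemize}
The committee size $k$ and the number of padding voters are chosen so that the element voters together deserve exactly $q$ seats. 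Utilities are chosen in $\set{0,\dots,5}$ so that three things hold.
\begin{enumerate}
\item A set $T$ of $q$ set alternatives gives every element voter utility at least $\tau$ only if $T$ is an exact cover.
\item Any strictly smaller coalition, or any coalition mixing in padding voters, cannot demand enough to beat $W_0$.
\item The padding voters' own demands are already satisfied by $W_0$.
\end{enumerate}
A second selfish tier with utility values such as $4$ or $5$ would let me tune the thresholds with constant numbers.

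\textbf{Completeness.} If an exact cover $T$ exists, then $(S,T)$ with $S=\{v_x\}$ satisfies $|S|/n\ge |T|/k$, and $\min_{i\in S}u_i(T)\ge\tau>\max_{i\in S}u_i(W_0)$. So $W_0$ violates FJR, and a correct rule must return something other than $W_0$.

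\textbf{Soundness.} Conversely, if no exact cover exists, I want to show that $W_0$ is the unique FJR committee, or at least that every FJR committee has an easily recognizable form, such as containing no set alternative. I would argue this in two parts.
\begin{itemize}
\item \emph{$W_0$ satisfies FJR.} I would enumerate coalition types: subsets of element voters, padding voters, and mixtures. Each time I would bound $\min_{i\in S}u_i(T)$ using $|T|\le k|S|/n$ together with the absence of a cover. This uses the counting fact that $\ell<q$ triples cover at most $3\ell$ elements, with equality only on a partial exact cover.
\item \emph{Any other committee violates FJR.} Any committee that spends a seat on a set alternative or deviates from $W_0$ leaves some padding or selfish group under-served, and that group can then demand its own selfish alternatives.
\end{itemize}
Given this, the algorithm runs the rule and checks whether its output equals $W_0$, which decides X3C. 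Since all utilities are bounded by $5$ and the instance size is polynomial in $|X|+|\mathcal{F}|$, the hardness is strong.

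\textbf{Main obstacle.} I expect the soundness direction to be the hardest step: certifying that $W_0$ satisfies FJR against \emph{all} coalitions $S$ and sets $T$ when no cover exists. The delicate cases are partial coalitions of element voters. A subset $S'$ of the element voters deserves $\lfloor k|S'|/n\rfloor$ seats and might achieve threshold $\tau$ with a partial packing of triples. To block this, I would first try making element voters receive a small amount of \emph{extra} utility from a shared selfish alternative, so that per-seat demand is tight only when $S$ is the whole element group. Failing that, I would scale the proportion so that $q$ seats are deserved only by the full group, with rounding killing all smaller groups. Keeping all values within $\set{0,\dots,5}$ while making this rounding argument work is the crux I would iterate on.
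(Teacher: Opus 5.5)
\textbf{There is a genuine gap, and it is more basic than tuning constants: the skeleton you describe is internally inconsistent.}

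The condition $|S|/n \ge |T|/k$ is homogeneous, since it reads $k|S| \ge n|T|$. So if the $3q$ element voters (replicated or not) deserve $q$ seats, the voters of any single triple $F$ deserve one seat. No floor ever intervenes, so ``rounding killing all smaller groups'' cannot happen.

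Moreover, your completeness witness $T$ is an \emph{exact} cover. Each $v_x$ therefore gets utility from exactly one alternative of $T$, namely $a_{F(x)}$, and $u_{v_x}(T) = u_{v_x}(a_{F(x)}) \ge \tau > u_{v_x}(W_0)$. Hence $(\{v_x : x \in F\}, \{a_F\})$ is already an FJR violation of $W_0$ for every $F$ in the cover. With the uniform ``moderate utility'' you describe, it is a violation for every $F \in \mathcal{F}$, whether or not a cover exists. So your requirements 1 and 2 cannot hold together, and $W_0$ fails FJR on NO instances as well. More generally, any design in which the completeness violation splits into triple violations would let you decide X3C in polynomial time by checking single triples.

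Your shared-alternative idea is the only suggested remedy that breaks this homogeneity: it adds a fixed one-seat overhead that only the full group can afford. But it forces $k/n = (q+1)/(3q)$, per-triple selfish structure, and padding voters. Above all, it still needs the soundness argument that \emph{every} committee other than $W_0$ violates FJR on NO instances. None of this is carried out.

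The missing idea is to reduce from a problem whose certificates are already ``partial,'' so that sub-coalition demands are exactly what the source problem asks about. The paper uses \textsc{PartialTripleCover}: nonempty $S \subseteq L$ and $T \subseteq R$ with $|S| = |T|$ and every $\ell \in S$ having at least $3$ neighbors in $T$. It is shown \textbf{NP}-hard via 3-regular subgraph. The construction has one voter per $\ell$, an alternative $c_r$ worth $2$ to its neighbors, a selfish $d_\ell$ worth $5$, and $k = n$.

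\begin{itemize}
\item A partial triple cover gives every member of the coalition utility at least $6 > 5$, so $W_0$ violates FJR.
\item Conversely, take any $W \ne W_0$. Let $S'$ be the voters whose $d_\ell$ was dropped and $T'$ index the chosen $c_r$; then $|S'| \ge |T'|$. Since no partial triple cover exists, some $\ell^* \in S'$ has at most two neighbors in $T'$. (Apply non-existence to a size-$|T'|$ subset of $S'$, or take any $\ell^*$ if $T'$ is empty.) Thus $u_{i_{\ell^*}}(W) \le 4 < 5$, and $\{i_{\ell^*}\}$ demands $d_{\ell^*}$.
\end{itemize}

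Uniqueness of $W_0$ then follows from the existence of FJR outcomes. So the step you expected to be hardest, checking $W_0$ against all coalitions, is never needed.
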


We prove this via a reduction from a problem that we call \textsc{PartialTripleCover}. We show that partial triple cover is NP-hard in \Cref{app:partial-triple-cover}.

\alg{\textsc{PartialTripleCover}: Given a bipartite graph $G=(L\cup R, E)$, do there exist non-empty $S\subseteq L$ and $T\subseteq R$ such that $\abs{S}=\abs{T}$ and $\abs{\set{r \in T: \set{\ell,r} \in E}}\geq 3$ for all $\ell \in S$?}

The proof strategy is quite similar to the proof outlined in \Cref{subsec:fjr-hard-constant-n}: Here,  each voter has a single \emph{selfish} alternative giving only them utility; we use $k=n$ so that each voter can demand their selfish alternative. Furthermore, we insert the bipartite graph of a \textsc{PartialTripleCover} in the instance, so that a partial triple cover $S,T$ corresponds to a FJR violation of the \emph{selfish} committee, consisting of the $n$ selfish alternatives. Crucially, we also show that if no partial triple cover exists, the selfish committee is the only committee satisfying FJR. Thus, if and only if the voting rule returns the selfish committee, no partial triple cover exists.

\begin{proof}[\Cref{thm:fjr-strong-hardness}]
    Let us assume that we have a polynomial time voting rule that satisfies FJR on all committee election instances where $u_i:C\to\set{0,1,...,5}$ for all $i\in N$. We will show that we can use this voting rule to devise a polynomial time algorithm for \textsc{PartialTripleCover}, thus obtaining $\P=\NP$ by \Cref{lem:partial_triple_cover-hardness}.
    
    Given an instance $G=(L\cup R,E)$ of \textsc{PartialTripleCover}, we construct an election instance $\elec=(N,C,k,\paran{u_i}_{i \in N})$. For each vertex $\ell \in L$ we create one voter $i_\ell \in N$. For each vertex $r\in R$, we create one alternative $c_r\in C$, where voters gain utility $2$ from $c_r$ if $\ell,r$ share an edge in $G$, else $0$. 
    Furthermore, we create one \emph{selfish} alternative $d_\ell \in C$ for every voter $i_\ell \in N$, where voter $i_\ell$ gets utility $5$ from $d_\ell$ and all other voters get no utility. 
    We set $k=n=\abs{L}$.
    
    Consider the \emph{selfish} committee $W_0=\set{d_\ell}_{\ell \in L}\subseteq C$, which has size $n=k$. We know that $u_{i}(W_0)=5$ for all $i\in N$. We will show that if $G$ has a partial triple cover, $W_0$ does not satisfy FJR, but that if $G$ has no partial triple cover, $W_0$ is the unique committee that satisfies FJR. This implies a polynomial-time algorithm for \textsc{PartialTripleCover}: Let $W\subseteq C$, $\abs{W}\leq k$ be any committee satisfying FJR as returned by our voting rule; if and only if $W\neq W_0$ we know that $G$ has a partial triple cover.
    
    For the first direction, let $S',T'$ be a partial triple cover of $G$. Consider $S = \set{i_\ell: \ell \in S'}$ and $T=\set{c_r: r\in T'}$. We get that 
    $$u_{i_\ell}(T) = \sum_{c_r\in T}u_{i_\ell}(c_r) = 2\cdot \abs{r\in T': \set{\ell, r}\in E}\geq 6.$$ Thus, there exist $S\subseteq N$, $T\subseteq C$, such that $\abs{S}/n\geq \abs{T}/k$ and 
    $$\min_{i\in S} u_i(T) \geq 6 > 5=\max_{i\in S}u_i(W_0),$$ so $W_0$ does not satisfy FJR.
    
    For the other direction, assume $G$ has no partial triple cover. Let $W\subseteq C$ be any committee of size at most $k$ different from $W_0$. Let $S'$ be the set of all $\ell\in L$ such that $d_\ell \notin W$ and let $T'$ be the set of all $r\in R$ such that $c_r\in W$. Since $\abs{L\setminus S'} + \abs{T'} \leq k$ and $\abs{L\setminus S'} = k - \abs{S'}$, we know that $\abs{S'}\geq \abs{T'}$. Let $S''$ be an arbitrary subset of $S'$ of size $\abs{T'}$.
     If $\abs{T'}\geq 1$, we know that since $G$ doesn't have a partial triple cover, there exists a vertex $\ell^* \in S''$ such that $\abs{r\in T':\set{\ell^*, r}\in E} \leq 2$. If $\abs{T'}=0$, we know that such a $\ell^* \in S'$ exists, since $T'=\emptyset$ and $\abs{S'}\geq 1$.
    For this $\ell^*$, it holds that $$u_{i_{\ell^*}}(W) = \sum_{\ell \in L\setminus S'}u_{i_{\ell^*}}(d_\ell) + \sum_{r \in T'}u_{i_{\ell^*}}(c_r) =  2\cdot \abs{r\in T':\set{\ell^*, r}\in E} \leq 4.$$ However, for $S=\set{i_{\ell^*}}\subseteq N$, $T=\set{d_{\ell^*}}\subseteq C$, it holds that $\abs{S}/n\geq \abs{T}/k$ (both are $1/n$) and 
    $$\min_{i\in S}u_i(T) =5>4 \geq \max_{i\in S}u_i(W),$$ so $W$ doesn't satisfy FJR. Since there always exists a committee of size $k$ satisfying FJR (\Cref{prop:fjr-existance}), we get that $W_0$ is the unique committee satisfying FJR. \qed
\end{proof}

By a very similar reduction from \textsc{PartialTripleCover}, we also get a hardness result for cost-utilities from a bounded range. The proof is deferred to \Cref{app:fjr-hardness}.  

\begin{restatable}{theorem}{fjrstronghardnesscostutils}\label{thm:fjr-strong-hardness-cost-utils}
    Unless $\P=\NP$, there exists no polynomial-time computable voting rule satisfying FJR, even for cost-utilities with $\cost\colon C\to [5]$.
\end{restatable}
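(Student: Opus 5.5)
The plan is to mirror the proof of \Cref{thm:fjr-strong-hardness} as closely as possible and again reduce from \textsc{PartialTripleCover} (\Cref{lem:partial_triple_cover-hardness}). Given $G=(L\cup R,E)$, I create one voter $i_\ell$ per $\ell\in L$, one alternative $c_r$ per $r\in R$, and one selfish alternative $d_\ell$ per voter. Under cost-utilities a voter's utility for an alternative is either its cost or $0$, so the utilities are fixed by the costs and the approval sets. The alternative $c_r$ is approved by the voters $i_\ell$ with $\set{\ell,r}\in E$, and $d_\ell$ is approved only by $i_\ell$. The task is to choose costs in $[5]$ and a budget $B$ so that the selfish outcome $W_0=\set{d_\ell}_{\ell\in L}$ is feasible, and is the unique FJR outcome exactly when $G$ has no partial triple cover. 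Given that, the same argument as before shows that a polynomial-time FJR rule decides \textsc{PartialTripleCover}.

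\textbf{Forward direction.} I would first try $\cost(c_r)=\cost(d_\ell)=5$ and $B=5n$, so that each voter's budget share equals the cost of one alternative.
\begin{itemize}
\item With this choice, a coalition $S$ can afford any $T$ with $\abs{T}\le\abs{S}$, matching the ``$\abs{S}=\abs{T}$'' condition of \textsc{PartialTripleCover}.
\item A partial triple cover $(S',T')$ then gives every voter in $S$ utility at least $15$ from $T$. This beats $\max_{i\in S}u_i(W_0)=5$, so $W_0$ violates FJR.
\end{itemize}

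\textbf{Reverse direction and the main obstacle.} Let $W\neq W_0$ be feasible, let $S'$ be the set of $\ell$ with $d_\ell\notin W$, and let $T'$ be the set of $r$ with $c_r\in W$. The budget count gives $\abs{T'}\le\abs{S'}$ exactly as in \Cref{thm:fjr-strong-hardness}. Since no partial triple cover exists, some $\ell^*\in S'$ has at most two neighbours in $T'$, so $u_{i_{\ell^*}}(W)\le 10$. The hard part is producing a deviation that beats $10$.

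With cost-utilities, a singleton can only demand total cost $5$ and hence utility $5$. In the unit-cost proof the singleton deviation worked only because the utility $5$ of $d_\ell$ exceeded twice the edge utility $2$; cost-utilities break this decoupling of utility from cost.

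\textbf{Proposed fix.} I would replace each $\ell\in L$ by a block of $g$ identical voters and each $d_\ell$ by a selfish bundle of total cost $5g$ approved by that whole block.
\begin{itemize}
\item The block's share then pays for the bundle, giving each block member utility $5g$.
\item For the reverse direction I want $5g>10$, so that a block missing its bundle but seeing at most two neighbours can demand its bundle and violate FJR.
\item For the forward direction I want three neighbours to give $15>5g$, so that the covering coalition prefers its $T$.
\end{itemize}
So $g=2$ is borderline and $g=3$ fails the forward inequality. I therefore expect to have to fine-tune the costs, for example making $c_r$ cost $4$ or $5$ and the selfish items cost $2$ or $3$, padding with cheap dummy alternatives, and adjusting $B$ accordingly.

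\textbf{Constraints to check when fine-tuning.} The parameters must simultaneously satisfy three conditions:
\begin{itemize}
\item Three edges strictly beat the selfish utility.
\item Two edges are strictly beaten by the selfish utility.
\item The budget accounting still forces $\abs{T'}\le$ (number of blocks missing their bundle). This is needed so that the absence of a partial triple cover yields the vertex $\ell^*$.
\end{itemize}
Finding integer costs in $[5]$ meeting all three conditions is the step I expect to be the main obstacle. Once it is done, uniqueness of $W_0$ follows from \Cref{prop:fjr-existance} exactly as before.
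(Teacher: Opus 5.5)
Your proposal has a genuine gap. It never reaches an actual construction, and the block gadget you sketch cannot be completed by tuning costs alone. In that gadget every voter of block $\ell$ approves both the selfish bundle (cost $b$) and the edge alternatives $c_r$ (cost $a$). Your first two constraints force $2a<b<3a$. So a block that drops its bundle frees budget $b>2a$, which is enough for at least two edge alternatives. The budget then only gives $|T'|\le (b/a)|S'|$, and the absence of a partial triple cover says nothing about neighbourhoods inside a set $T'$ larger than $S'$.

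This breaks the reduction itself, not just the argument. Take $G=K_{2,3}$, which has no partial triple cover, since $|T|=|S|\le 2$ leaves each vertex at most $2$ neighbours in $T$. Use the natural choice $a=2$, single selfish items of cost $b=5$, and $B=10$, with any number of identical voters per block. The outcome $\{c_{r_1},c_{r_2},c_{r_3}\}$ gives every voter utility $6$. A single block can demand at most $5$, and with the full budget one cannot add a selfish item to all three edges. So no coalition can guarantee all its members more than $6$. This outcome therefore satisfies FJR, and $W_0$ is not the unique FJR outcome. Your third constraint is incompatible with the first two in this framework. ``Padding with dummy alternatives'' helps only if some voters are hurt when the padding is missing.

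The missing idea is a second group of voters in each block, unrelated to the edges, with its own filler alternative that absorbs the surplus budget. In the paper each $\ell$ gets five voters:
\begin{itemize}
\item only $i^1_\ell,i^2_\ell$ approve the edge alternatives, which now cost $2$;
\item all five approve $d_\ell$ of cost $5$;
\item $i^3_\ell,i^4_\ell,i^5_\ell$ also approve a filler $d'_\ell$ of cost $3$;
\item the budget is $B=n=5|L|$.
\end{itemize}
In the forward direction, the coalition $\{i^1_\ell,i^2_\ell:\ell\in S'\}$ has share exactly $2|T'|=\cost(T)$, and each member gets utility at least $6>5$.

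In the reverse direction there are two cases. If $|T'|>|S'|$, the constraint $\cost(W)\le B$ forces some $\ell\in S'$ with neither $d_\ell$ nor $d'_\ell$ in $W$. Then $\{i^3_\ell,i^4_\ell,i^5_\ell\}$ violates FJR by demanding $d'_\ell$, since $3>0$. Otherwise your argument yields $\ell^*$, and all five of its voters have utility at most $4$, so they demand $d_{\ell^*}$ with $5>4$. In effect, dropping $d_\ell$ frees only one edge's worth of budget, which is exactly your third constraint.

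Your suggested tuning direction, with expensive $c_r$ and cheap selfish items, also pushes you toward multi-item bundles whose partial inclusion complicates the reverse direction. Cheap edges with $2\cdot 2<5<3\cdot 2$ is the clean choice.
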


It is natural to wonder whether a similar proof technique can be used to show that unless $\P=\NP$, no polynomial-time voting rule satisfies EJR for committee elections. The main obstacle is that due to the stronger cohesiveness condition, the existence of a partial triple cover (in the graph constructed in the proof of \Cref{thm:fjr-strong-hardness}) no longer necessarily constitutes a violation of the axiom by the selfish committee; for this, the partial triple cover needs to be a biclique. In return, however, the absence of a balanced biclique does not suffice to guarantee that the selfish committee is the only committee satisfying the axiom. To alleviate this, we turn to the corresponding promise problem:

\alg{\textsc{BalancedBiclique-Or-No-Partial-Multi-Cover (Bb-Pmc)}: Given a bipartite graph $G=(L\cup R, E)$ and $q\in [\abs{L}]$ such that either $G$ has a size $q$ balanced biclique ($S\subseteq L$ and $T\subseteq R$ such that $\abs{S}=\abs{T}=q$ and $\set{\ell,r} \in E$ for all $\ell\in S$ and $r\in T$) or $G$ has no partial $q$-cover (a non-empty $S\subseteq L$ and $T\subseteq R$ such that $\abs{S}=\abs{T}$ and $\abs{\set{r \in T: \set{\ell,r} \in E}}\geq q$ for all $\ell \in S$), decide which of the two is the case.}

\begin{restatable}{proposition}{promisetoejr}\label{prop:promise-to-ejr}
    If there exists a polynomial-time computable voting rule that satisfies EJR for all committee elections where $u_i\colon C\to \set{0,1, ..., 2n}$, then \textsc{Bb-Pmc} is in \textbf{P}.
\end{restatable}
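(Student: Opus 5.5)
We reuse the reduction from the proof of \Cref{thm:fjr-strong-hardness}, with two parameters tuned to $q$. Given an instance $(G=(L\cup R,E),q)$ of \textsc{Bb-Pmc}, we build a committee election with one voter $i_\ell$ per $\ell\in L$ and one alternative $c_r$ per $r\in R$. Voter $i_\ell$ gets utility $2$ from $c_r$ if $\set{\ell,r}\in E$ and $0$ otherwise. We add one selfish alternative $d_\ell$ per voter, worth $2q-1$ to $i_\ell$ and $0$ to every other voter, and we set $k=n=\abs{L}$.

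Since $q\le\abs{L}=n$, all utilities lie in $\set{0,1,\dots,2n}$. The choice of the selfish value is governed by the strict chain
\[
2q \;>\; 2q-1 \;>\; 2(q-1).
\]
The left inequality will make a biclique an EJR violation, and the right one will make the absence of a partial $q$-cover force uniqueness. As before, let $W_0=\set{d_\ell}_{\ell\in L}$ be the selfish committee. We run the assumed EJR rule on this instance and answer ``no partial $q$-cover'' exactly when it returns $W_0$. The two promise cases are exclusive in this sense: a size-$q$ balanced biclique is itself a partial $q$-cover, so every instance satisfying the promise is in exactly one of the two cases.

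\emph{Biclique case.} Let $S',T'$ be a balanced biclique of size $q$, and set $S=\set{i_\ell:\ell\in S'}$ and $T=\set{c_r:r\in T'}$. Then $\abs{S}/n=\abs{T}/k$. Because every pair in $S'\times T'$ is an edge, $\min_{i\in S}u_i(c_r)=2$ for each $c_r\in T$. Hence
\[
\sum_{c\in T}\min_{i\in S}u_i(c)=2q>2q-1=\max_{i\in S}u_i(W_0),
\]
so $W_0$ violates EJR. The completeness of the biclique is exactly what the EJR sum needs, which is why a mere partial cover would not suffice here.

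\emph{No partial $q$-cover case.} Let $W\ne W_0$ with $\abs{W}\le k$. Define $S'$ as the set of $\ell$ with $d_\ell\notin W$, and $T'$ as the set of $r$ with $c_r\in W$. The counting from the proof of \Cref{thm:fjr-strong-hardness} gives $\abs{S'}\ge\abs{T'}$ and $S'\neq\emptyset$. If $T'\ne\emptyset$, take any $S''\subseteq S'$ with $\abs{S''}=\abs{T'}$. Since $(S'',T')$ is not a partial $q$-cover, some $\ell^*\in S''$ has at most $q-1$ neighbours in $T'$. If $T'=\emptyset$, any $\ell^*\in S'$ has no neighbours in $T'$. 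In both cases $d_{\ell^*}\notin W$, so
\[
u_{i_{\ell^*}}(W)\le 2(q-1)<2q-1.
\]
Therefore the pair $S=\set{i_{\ell^*}}$, $T=\set{d_{\ell^*}}$ satisfies $\abs{S}/n=\abs{T}/k$ and violates EJR; for a singleton coalition EJR and FJR coincide. Every committee other than $W_0$ thus fails EJR. By \Cref{prop:fjr-existance} and \Cref{prop:implications}, some committee satisfies EJR, so $W_0$ is the unique EJR committee.

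The construction takes polynomial time, so a polynomial-time EJR rule decides \textsc{Bb-Pmc} in polynomial time. The main obstacle is choosing the selfish value so that both directions hold at once. Utility $2$ per edge with selfish value $2q-1$ separates ``$q$ common approvals'' from ``at most $q-1$ approvals'', and it keeps every utility at most $2n$.
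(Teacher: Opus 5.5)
Your proof is correct and follows the paper's argument essentially line for line. It uses the same construction with edge utility $2$ and selfish utility $2q-1$, the same biclique violation of $W_0$, the same singleton violation $(\set{i_{\ell^*}},\set{d_{\ell^*}})$ for any $W\neq W_0$, and the same uniqueness conclusion via \Cref{prop:fjr-existance} and \Cref{prop:implications}. Your explicit remarks that $q\le n$ keeps all utilities in $\set{0,\dots,2n}$, and that a size-$q$ balanced biclique is itself a partial $q$-cover (so the two promise cases are mutually exclusive), are small clarifications the paper leaves implicit.
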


The proof of \Cref{prop:promise-to-ejr} is deferred to \Cref{sec:promise-to-ejr-proof}. It is unknown whether \textsc{Bb-Pmc} is \textbf{NP}-hard. Should this be the case, then by \Cref{prop:promise-to-ejr}, we get that unless $\P=\NP$, no polynomial-time voting rule satisfies EJR for committee elections with utilities bounded above by $2n$.

\section{PJR for Committee Elections}\label{sec:general-n-pjr}

In this section, we extend the Expanding Approvals Rule (EAR) from committee elections with ordinal preferences, due to Aziz and Lee \cite{aziz2020expanding}, to committee elections with additive utilities. EAR is known to satisfy rankPJR, a version of PJR for ordinal preferences. We show that EAR implies PJR in the more general setting of additive utilities. We start by extending its definition:

\begin{definition}
    The \emph{Expanding Approvals Rule (EAR) for additive utilities} starts with $W=\emptyset$, threshold $t= \max_{i\in N, c\in C}u_i(c)$ and assigns each voter a budget of $\nicefrac{k}{n}$. Repeatedly, it looks for an alternative $c\in C\setminus W$ for which the total budget of voters with $u_i(c)\geq t$ is at least $1$. If such a $c$ exists, it adds $c$ to $W$ and reduces the total budget of voters with $u_i(c)\geq t$ by $1$ (without making any voter's budget negative). If no such $c$ exists, it decreases $t$ to the next-smallest value in $\set{u_i(c): i\in N, c\in C}$, if $t$ already is the smallest value it returns $W$.
\end{definition} 

The proportionality guarantees of EAR that we prove hold for \emph{any} way of deducting a total budget of $1$ from the voters when adding an alternative to $c$,\footnote{Aziz and Lee \cite{aziz2020expanding} propose deducting the same fraction of their remaining budget from all voters, while in \cite{ejrplus,equalshares_fjr}, the authors consider deducting as-equal-as-possible absolute amounts from each voter. By \Cref{thm:ejr-impossible-for-sequential-committee}, we know that no choice of splitting the budget gives EJR, the next-stronger proportionality guarantee in our list.} and for any way of breaking ties when choosing $c$.

\begin{theorem}\label{thm:pjr-from-ear}
    For committee elections with additive utilities, the Expanding Approvals Rule satisfies PJR.
\end{theorem}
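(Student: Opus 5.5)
The plan is to argue by contradiction via a counting argument on the sorted values of $T$. Fix a coalition $S\subseteq N$ and a set $T\subseteq C$ with $\abs{S}/n\geq \abs{T}/k$, and write $t=\abs{T}$, so that the total initial budget of $S$ is $\abs{S}\cdot k/n\geq t$. For $c\in C$ let $v(c)=\min_{i\in S}u_i(c)$, and order $T=\set{c_1,\dots,c_t}$ so that $v_1\geq v_2\geq\dots\geq v_t$, where $v_j=v(c_j)$. For the output $W$ of EAR let $w(c)=\max_{i\in S}u_i(c)$ for $c\in W$. The goal is $\sum_{c\in W}w(c)\geq\sum_{j=1}^t v_j$. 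It suffices to prove the following claim.

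\emph{Claim.} For every $j\in[t]$, at least $j$ alternatives $c\in W$ satisfy $w(c)\geq v_j$.

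Given the claim, sort the values $w(c)$, $c\in W$, in decreasing order. The claim says the $j$-th largest of them is at least $v_j$ for each $j\le t$. Summing the top $t$ of them (all values are nonnegative) gives the PJR inequality.

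To prove the claim, fix $j$. We may assume $v_j>0$, since otherwise the claim is trivial because utilities are nonnegative. Then $v_j$ lies in $\set{u_i(c)}$. The threshold of EAR starts at the maximum utility and decreases through every value of this set, so at some point it equals $v_j$. Consider the moment the rule finishes the phase with threshold $t_{\mathrm{cur}}=v_j$, i.e.\ just before the threshold decreases below $v_j$ or the rule terminates. There are two cases.
\begin{itemize}
\item[(a)] If $c_1,\dots,c_j$ are all in $W$ at that moment, they are $j$ distinct alternatives of $W$. Each satisfies $w(c_i)\geq v(c_i)=v_i\geq v_j$, so the claim holds.
\item[(b)] Otherwise some $c_i$ with $i\le j$ is not in $W$. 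Every voter of $S$ has $u(c_i)\geq v_i\geq v_j$, so the whole coalition $S$ counts towards $c_i$ at this threshold. Since the rule did not add $c_i$, the remaining budget of $S$ must be less than $1$. Hence $S$ has so far spent more than $t-1$.
\end{itemize}

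The key step, and the one I expect to be the main obstacle, is turning ``spent more than $t-1$'' into ``at least $t$ alternatives''. Averaging the spent amount against thresholds loses a rounding term at $j=t$, so that route does not work directly. Instead I will count alternatives. Each selected alternative deducts a total of at most $1$ from all voters, hence at most $1$ from $S$. So the number of alternatives in $W$ for which some voter of $S$ paid a positive amount is more than $t-1$, i.e.\ at least $t\geq j$.

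Moreover, a voter $i$ pays for $c$ only if $u_i(c)\geq$ the threshold at the time $c$ was added. All these additions happened while the threshold was at least $v_j$. So each such $c$ has $w(c)\geq u_i(c)\geq v_j$ for the paying voter $i\in S$. This gives the claim in case (b) as well.

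Finally, the argument never uses how the unit of budget is split among eligible voters, beyond nonnegativity and the total of $1$. It also never uses how ties between alternatives are broken. So the proof covers every variant of EAR, as remarked before \Cref{thm:pjr-from-ear}.
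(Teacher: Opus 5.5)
Your proposal is correct and is essentially the paper's proof: the same claim (for each $j$, at least $j$ alternatives of $W$ have $\max_{i\in S}u_i(c)\geq v_j$), proved at the end of the threshold-$v_j$ phase by charging at most one unit of $S$'s budget per selected alternative, then the same summation. Your case (b) is just the contrapositive of the paper's contradiction argument. One small slip: the case $v_j=0$ is not ``trivial'' for your claim as stated, since it would require $|W|\geq j$. You do not need to exclude it, though: $v_j=\min_{i\in S}u_i(c_j)$ is always attained by some voter, so it is always a threshold value and your argument covers it verbatim.
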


\begin{proof}
    Consider any $S \subseteq N, T \subseteq C$ with $\frac{|S|}{n} \geq \frac{|T|}{k}$. For $c\in T$, let $\alpha(c)=\min_{i\in S} u_i(c)$ be the minimum utility for $c$ in $S$. Sort $T=\set{c_1,...,c_\ell}$ in order of decreasing minimum utility in $S$, so that $\alpha(c_1)\geq ... \geq \alpha(c_\ell)$. We will show that for every $j\in [\ell]$, there exist at least $j$ alternatives $c \in W$ for which there exists $i\in S$ with $u_i(c)\geq \alpha(c_j)$. 
    
    For any $j\in [\ell]$, consider the moment when the algorithm is about to lower the threshold to the first $t < \alpha(c_j)$ (or, if $t$ already is minimal, is about to return $W$). Assume towards a contradiction that at this point there are at most $j-1$ alternatives $c \in W$ for which some $i\in S$ has $u_i(c)\geq \alpha(c_j)$. Thus, there exists $c^*\in\set{c_1,...,c_j}$ that is not yet in $W$. So far, voters have only spent budget on alternatives that they have utility at least $\alpha(c_j)$ for, so the total remaining budget of $S$ is at least $\nicefrac{k}{n}\abs{S}-(j-1)\geq \ell-j+1\geq 1$. Since $u_i(c^*)\geq \alpha(c_j)$ for all $i\in S$ and the total budget of $S$ is at least $1$, the algorithm would add $c^*$ before lowering the threshold, a contradiction.

    For $j=1,...,\ell$, let $w_j \notin \set{w_1,...,w_{j-1}}$ be an alternative in $W$ with $u_i(w_j)\geq \alpha(c_j)$; by the above argument we know that such $w_1,...,w_\ell$ exist. Thus, \begin{align*}
        \sum_{c \in W} \max_{i\in S} u_i(c) \geq \sum_{j=1}^\ell \max_{i\in S} u_i(w_j) \geq \sum_{j=1}^\ell \alpha(c_j) = \sum_{c \in T} \min_{i \in S} u_i(c). \tag*{\qed}
    \end{align*}
\end{proof}

\section{The Limitations of Sequential Voting Rules}\label{sec:greedy-limitations}

With one exception that we discuss later in this section, all known voting rules that are computable in polynomial time and satisfy justified representation axioms are \emph{greedy} in nature: They repeatedly pick a single alternative without ``planning ahead''. In this section, we formalize this using the notion of a \emph{sequential voting rule} from \cite{brill2023sequential} and extend it to allow for a constant amount of ``planning ahead''. We show that no voting rule of this type can satisfy EJR for committee elections and PJR for participatory budgeting. 

\begin{definition}[Sequential Voting Rules with $\beta$-lookahead] Given an election instance $\elec$, a sequence $W'=(c_1,...,c_\ell)$ of distinct alternatives in $C$, an alternative $c\in \set{\perp} \cup C\setminus W'$, and a lookahead set $L\subseteq C\setminus (W'\cup\set{c})$ a \emph{marginal contribution function} $f$ assigns a score $f(c,W',\elec\mid_{W'\cup\set{c}\cup L})\in \R$, where $\elec\mid_{W'\cup\set{c}\cup L}$ is $\elec$ restricted to only contain alternatives $C=W'\cup\set{c}\cup L$.

A voting rule $\mathcal R$ is \emph{sequential with $\beta$-lookahead} if for some fixed marginal contribution function $f$ its outcome 
$W=\set{c_1,...,c_\ell}$
is obtained by selecting $$c_j = \argmax_{c\in \set{\perp}\cup C\setminus W'_{j-1}}\max_{\substack{L\subseteq C \setminus(W'_{j-1}\cup\set{c})\\ \abs{L}\leq \beta }} f(c,W'_{j-1}, \elec\mid_{W'_{j-1}\cup\set{c}\cup L}),$$ breaking ties consistently\footnote{In particular, given is an order over all alternatives that is instance-independent, i.e. only depends on $(\cost(c),(u_i(c))_{i\in N})$, and is used only for tie-breaking. If two alternatives have the same $(\cost(c),(u_i(c))_{i\in N})$ they are identical, so no tie-breaking is needed. The results in this section hold for \emph{any} instance-independent order.}, for all $j\in [\ell+1]$ and $c_{\ell+1}=\perp$, where $W'_{j-1} = (c_1,...,c_{j-1})$. If a voting rule is sequential with $0$-lookahead, we say it is \emph{sequential}.
\end{definition}

In other words, a sequential voting rule with $\beta$-lookahead decides on the next alternative it adds to $W$\emdash or whether to terminate by selecting $\perp$\emdash solely based on a score of $c$, which only depends on $c$ itself, the alternatives in $W$ so far, and the lookahead set of at most $\beta$ additional alternatives. This is a natural extension of sequential voting rules: GCR satisfies FJR by looking at sets of alternatives $T$, trying to find multiple alternatives that complement each other well so that all $i\in S$ have high utility for $T$. We may hope (unfortunately, wrongfully, though) that looking for sets of up to $\beta$ alternatives, then adding them one-by-one, suffices.

\begin{theorem}\label{thm:ejr-impossible-for-sequential-committee}
For all $\beta \in \N$, no sequential voting rule with $\beta$-lookahead can give EJR, even for $n=\beta+2$, unit-costs, and utilities in $\set{0,...,2\beta+3}$.
\end{theorem}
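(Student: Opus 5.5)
The plan is an indistinguishability argument. A sequential rule with $\beta$-lookahead decides its first pick $c_1$ only from the values $f(c,\emptyset,\elec\mid_{\set{c}\cup L})$ with $\abs{L}\le\beta$, so it only ever looks at sets of at most $\beta+1$ alternatives. I would build two committee instances $\elec_1,\elec_2$ with $n=\beta+2$ voters and unit costs that agree on every restriction to at most $\beta+1$ alternatives. Here "agree" means the restrictions contain literally the same alternatives, where an alternative is identified by its utility vector $(u_i(c))_{i\in N}$. I would then show the following: whichever alternative $c_1$ the rule picks first, one of $\elec_1,\elec_2$ has no EJR committee containing $c_1$.

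Since the tie-breaking order is instance-independent, the argmax over identical local scores is the same in both instances. This includes the case where the rule terminates with $\perp$, which I would rule out directly: the empty committee trivially violates EJR in both instances, via a single voter and her best alternative. Hence $c_1$ is the same in both instances, and the rule must fail EJR on one of them. Nothing beyond the first step needs to be analysed, because whatever the rule does afterwards its output contains $c_1$.

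For the construction I would take $k=n=\beta+2$ and a ``hidden cohesion'' gadget that only becomes visible on $\beta+2$ alternatives at once. The idea is to use a pool of alternatives whose utility vectors form a cyclic or Latin-square pattern over the $\beta+2$ voters, with entries in $\set{0,\dots,2\beta+3}$. A full set of $\beta+2$ such alternatives then gives the grand coalition $S=N$ a large $\sum_{c\in T}\min_{i\in S}u_i(c)$, up to about $2\beta+3$, by pairing each voter's large entry with a common floor. Any $\beta+1$ of them, however, look like the corresponding $\beta+1$ alternatives of a ``decoy'' pattern in the other instance. In each instance I would add selfish-type alternatives, as in the proof of \Cref{thm:fjr-strong-hardness}, so that single voters can demand them.

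I would tune the numbers so that in $\elec_1$ the EJR committees are exactly those built around the cohesive pattern, which excludes the tempting high-score alternative $a$. In $\elec_2$ the committees must instead take the selfish alternatives, which excludes the common alternative $b$. The two sets of locally indistinguishable alternatives are then covered: any candidate for $c_1$ is forbidden in at least one instance. Proving ``no EJR committee contains $c$'' should follow the template of \Cref{thm:fjr-strong-hardness}. Given a committee $W\ni c$ of size at most $k$, I would exhibit either a single voter who is missing her selfish alternative and has utility below it, or the grand coalition, whose $\sum_{c\in T}\min_{i\in S}u_i(c)$ exceeds $\max_i u_i(W)$.

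The main obstacle is the combinatorial design itself. I need two multisets of utility vectors that share all sub-multisets of size $\beta+1$ but differ as wholes. This is a moment-matching or $(\beta+1)$-wise-independence condition, similar to constructions where two distributions agree on all $(\beta+1)$-marginals. At the same time the EJR-relevant quantity $\sum_{c\in T}\min_{i\in S}u_i(c)$ must differ enough between the two instances, and it is exactly this difference that dictates the bound $2\beta+3$. I would first try to get the construction working for $\beta=0$: three voters, utilities in $\set{0,1,2,3}$, and a hand-built pair of instances. I would then generalise by cyclically shifting a base vector of length $\beta+2$.
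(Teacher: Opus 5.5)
Your overall framework is sound and matches the paper's. The first pick depends only on the values $\max_{L} f(c,\emptyset,\elec\mid_{\set{c}\cup L})$ with $\abs{L}\le\beta$, and ties are broken by utility vector. So two locally indistinguishable instances force the same first pick, including $\perp$, which you correctly dispose of. It then suffices that this pick lies in no EJR committee of one of the two instances.

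The gap is that you never produce the two instances, and the design you sketch cannot exist. Every window $\set{c}\cup L$ displays the complete utility vector, over all voters, of each alternative in it. Singleton windows therefore already reveal exactly which utility vectors occur. A window consisting of $j\le\beta+1$ copies of a vector $x$ exists if and only if $x$ occurs at least $j$ times. So two instances with the same $n$ and $k$ whose windows of size at most $\beta+1$ coincide must consist of the same utility vectors. They can differ only in the multiplicity of vectors that occur at least $\beta+1$ times in both. This rules out a ``decoy'' pattern built from different vectors, and likewise a Latin-square or cyclically shifted design. Agreement on $(\beta+1)$-wise marginals is also beside the point: marginals concern voters, whereas the window restricts alternatives and always shows every voter. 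There is no moment-matching problem to solve.

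That observation is exactly the missing idea, and it makes the construction short. Take $k=n=\beta+2$ and selfish alternatives $c_i$ with $u_i(c_i)=2\beta+3$ and utility $0$ for everyone else. Add a common alternative $d$ with $u_i(d)=2$ for all $i$. Let $\elec_1$ contain $\beta+1$ copies of $d$ and $\elec_2$ contain $\beta+2$; adding a clone of an alternative already present $\beta+1$ times changes no window.

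In $\elec_1$, any committee missing some $c_i$ gives voter $i$ at most $2(\beta+1)<2\beta+3$. So $\set{i},\set{c_i}$ is a violation, and $\set{c_1,\dots,c_{\beta+2}}$ is the unique EJR committee.

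In $\elec_2$, that committee is violated by $S=N$ demanding all $\beta+2$ copies of $d$, since $2(\beta+2)>2\beta+3$. Every committee other than it and the all-$d$ committee misses some $c_i$ and has at most $\beta+1$ copies of $d$, so it is again violated by $\set{i},\set{c_i}$. By existence, the all-$d$ committee is the unique EJR committee.

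The two unique EJR committees are disjoint, which completes your argument. The chain $2(\beta+1)<2\beta+3<2(\beta+2)$ is exactly where the bound $2\beta+3$ on utilities comes from.
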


\begin{proof}
    Let $N=[\beta+2]$.  For every voter $i \in N$, let $c_i$ be an alternative such that $u_i(c_i)=2\beta+3$ and $u_{i'}(c_i)=0$ for $i'\in N \setminus\set{i}$. Let $d$ be an alternative with $u_i(d)=2$ for every $i \in N$. Consider two committee election instances with voters $N$ and committee size $k=\beta+2$: $\elec_1$ with all alternatives $c_i$ for $i\in N$ and $\beta+1$ copies of $d$, and  $\elec_2$ with the same alternatives but an additional, $(\beta+2)$nd copy of $d$. 
    
    In $\elec_1$, the only committee satisfying EJR is $W_1=\set{c_1, ..., c_{\beta+2}}$, since if any $c_i$ is not in $W$, voter $i$ only gets at most utility $2\beta + 2$ from $W$ but can proportionally demand $c_i$, so utility $2\beta+3$. In $\elec_2$, the only committee satisfying EJR is $W_2$, the $\beta+2$ copies of $d$: First, note that $W_1$ does not satisfy EJR since every voter gets utility $2\beta +3$, but $S=N$ can demand $T=W_2$, with $\sum_{d \in W_2}\min_{i\in S} u_i(d)= 2(\beta + 2) > 2\beta + 3 = \max_{i \in S}u_i(W_1)$. In any $W$ that is not $W_1$, there exists $i\in N$ such that $c_i$ is not in $W$. If $W$ is also not $W_2$, we know that at most $\beta+1$ copies of $d$ are in $W$, so this voter has $u_i(W) \leq 2\beta + 2$; an EJR violation since $i$ can proportionally demand $c_i$, so utility $2\beta+3$. 
    
    Since $\elec_2$ is obtained from $\elec_1$ by adding an additional clone of an alternative which already exists $\beta+1$ times in $\elec_1$ (alternative $d$), we get that $\elec_1$ and $\elec_2$ are indistinguishable for a $\beta$-sequential voting rule when adding the first candidate to $W$: The set of values of the marginal contribution function are the same. By consistent tie-breaking, any voting rule needs to select the same first alternative in both $\elec_1$ and $\elec_2$. However, since the unique committees satisfying EJR in these elections, $W_1$ and $W_2$, are disjoint, we get that any $\beta$-sequential voting rule has to violate EJR in at least one of the two elections. \qed
    \end{proof}

\begin{theorem}\label{thm:pjr-impossible-for-sequential-cost-util}
No sequential voting rule with $\beta$-lookahead can give PJR, even for $n=1$ and cost-utilities in $\set{0,...,\beta+3}$.
\end{theorem}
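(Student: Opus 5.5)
The plan is to reuse the clone argument from the proof of \Cref{thm:ejr-impossible-for-sequential-committee}. With a single voter and cost-utilities, PJR collapses to an optimization condition. For $n=1$ the only coalition is $S=\set{1}$, every feasible $T$ may be demanded, and $\max$ and $\min$ over $S$ coincide. So $W$ satisfies PJR if and only if $u_1(W)\geq u_1(T)$ for every $T$ with $\cost(T)\leq B$. With cost-utilities, and if the voter approves every alternative, this says exactly that $W$ maximizes $\cost(W)$ subject to $\cost(W)\leq B$, i.e.\ $W$ is an optimal Subset-Sum solution. I will build two single-voter instances $\elec_1,\elec_2$ with these properties:
\begin{itemize}
\item $\elec_2$ arises from $\elec_1$ by adding one more clone of an alternative $d$ that already appears at least $\beta+1$ times in $\elec_1$;
\item every optimal outcome of $\elec_1$ contains some alternative $a$ and no copy of $d$;
\item every optimal outcome of $\elec_2$ consists of copies of $d$ only.
\end{itemize}
In neither instance is the empty set optimal.

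A first candidate construction uses budget $B=2\beta+4$, an alternative $a$ with $\cost(a)=u_1(a)=2\beta+3$, and copies of $d$ with $\cost(d)=u_1(d)=2$. Instance $\elec_1$ has $\beta+1$ copies of $d$; instance $\elec_2$ has $\beta+2$ copies.
\begin{itemize}
\item In $\elec_1$, the set $\set{a}$ has value $2\beta+3$. Adding any copy of $d$ to $a$ exceeds the budget, and all $d$'s together give only $2\beta+2$. So $\set{a}$ is the unique PJR outcome.
\item In $\elec_2$, all $\beta+2$ copies of $d$ reach value $2\beta+4=B$. Any set containing $a$ gets at most $2\beta+3$. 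So the unique PJR outcome is all copies of $d$.
\end{itemize}

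The indistinguishability step then copies the EJR proof. Consider the first selection step. For any candidate $c$ and any lookahead set $L$ with $\abs{L}\leq\beta$, the restricted instance $\elec\mid_{\set{c}\cup L}$ contains at most $\beta+1$ copies of $d$. Every such restriction of $\elec_2$ is therefore isomorphic to a restriction of $\elec_1$, and conversely. Hence the set of attainable scores $f(c,\emptyset,\cdot)$ is the same in both instances for $c=a$, for $c=d$ and for $c=\perp$. Consistent, instance-independent tie-breaking then forces the same first choice in both instances. Whatever that choice is, it is missing from the unique PJR outcome of one of the two instances:
\begin{itemize}
\item choosing $a$ violates PJR in $\elec_2$;
\item choosing $d$ violates PJR in $\elec_1$;
\item choosing $\perp$ returns $\emptyset$, which violates PJR in both.
\end{itemize}

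The main obstacle is meeting the stated cost range $\set{0,\dots,\beta+3}$: the construction above uses cost $2\beta+3$ for $a$. Two requirements pull against each other. All optima of $\elec_1$ must avoid $d$, otherwise the rule might pick $d$ first and we would need to track indistinguishability over several rounds. At the same time, the $d$'s must strictly beat every set containing $a$ once one more clone is added. I would first try tuning parities, with $d$ of cost $2$, $a$ of odd cost and an even budget, so that $a$ cannot be topped up with $d$'s to fill the budget exactly. If that fails, I would allow $d$'s to appear alongside $a$ in optima of $\elec_1$ and run the argument inductively over rounds. After $j$ copies of $d$ have been picked, the two instances still differ by one clone. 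Indistinguishability then persists only while more than $\beta$ copies of $d$ remain, which calls for a larger supply of clones and a finer choice of costs.
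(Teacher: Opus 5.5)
\textbf{The cost bound is not met.} Your reduction of single-voter PJR to Subset-Sum and your clone-based indistinguishability step are sound, and your instance is correct. But it proves the theorem only for cost-utilities in $\set{0,\dots,2\beta+3}$. For every $\beta\geq 1$ the alternative $a$ costs $2\beta+3>\beta+3$, so the bound in the statement is not reached, and the proposal leaves exactly this point open.

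The parity repair you suggest cannot close it. Take $\cost(d)=2$ and $m_d\geq\beta+1$ copies of $d$ in $\elec_1$. Your three requirements then force:
\begin{itemize}
\item the optimum of $\elec_2$ to be all $m_d+1$ copies of $d$, with value $2m_d+2$;
\item the optimum value of $\elec_1$ to lie strictly between $2m_d$ (all copies of $d$, which must not be optimal) and $2m_d+2$, hence to equal $2m_d+1$.
\end{itemize}
Suppose an optimum $X^*$ of $\elec_1$ had two or more alternatives. It would contain a nonempty sub-multiset of even cost $2r$ (an even-cost alternative, or two odd-cost ones). Replacing it by $r\leq m_d$ copies of $d$ gives another optimum of $\elec_1$ that contains $d$, which is forbidden. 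So $X^*$ is a single alternative of cost $2m_d+1\geq 2\beta+3$, which is no better than your first attempt. The multi-round fallback is not worked out at all.

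\textbf{The missing idea.} Use two coprime costs, $\beta+2$ and $\beta+3$, so that modular arithmetic pins down the unique way to (nearly) exhaust the budget. The paper's construction is:
\begin{itemize}
\item $\cost(c)=\beta+3$, $\cost(d)=\beta+2$, and $B=(\beta+2)(\beta+3)$;
\item $\elec_1$ has $\beta+2$ copies of each;
\item $\elec_2$ has $\beta+1$ copies of $c$ and $\beta+3$ copies of $d$.
\end{itemize}
Reducing $x(\beta+3)+y(\beta+2)=B$ modulo $\beta+2$ and modulo $\beta+3$ shows that $B$ is hit only by all copies of $c$ in $\elec_1$, and only by all copies of $d$ in $\elec_2$. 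The two instances differ by trading a clone of $c$ for a clone of $d$ rather than by adding one. This is harmless, since each type occurs at least $\beta+1$ times in both.

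Your add-one-clone framework also works with these costs. Take $\beta+1$ copies of $c$, $B=(\beta+2)^2$, and $\beta+1$ versus $\beta+2$ copies of $d$. Then:
\begin{itemize}
\item $\elec_1$ cannot reach $B$, since modulo $\beta+2$ this needs $x=0$ and $y=\beta+2$;
\item $\elec_1$ reaches $B-1=(\beta+1)(\beta+3)$ only via all copies of $c$, since modulo $\beta+3$ this needs $y=0$;
\item $\elec_2$ reaches $B$ only via all $\beta+2$ copies of $d$.
\end{itemize}
For $\beta=0$ this is exactly your instance.
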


\begin{proof}
    Let $N=\set{i}$. Let $c$ be an alternative with $\cost(c)=u_i(c)=\beta + 3$, and let $d$ be an alternative with $\cost(d)=u_i(d)=\beta + 2$. Set $B=(\beta+2)(\beta+3)$ and consider the following two instances: In $\elec_1$, there are $\beta + 2$ copies of $c$ and $d$ each. In $\elec_2$, there are $\beta+1$ copies of $c$ and $\beta + 3$ copies of $d$. 

    For cost utilities and a single voter, an outcome $W$ satisfies PJR if and only if there is no set $T\subseteq C$ with $\cost(W)<\cost(T)\leq B$. Using only alternatives of cost $\beta+2$ and $\beta+3$, the only way an outcome $W$ can have $\cost(W)=B$ is if it either consists of $\beta+2$ alternatives of cost $\beta+3$ or of $\beta+3$ alternatives of cost $\beta+2$. Thus, in $\elec_1$ the only outcome satisfying PJR is to take the $\beta+2$ copies of $c$, while in $\elec_2$ the only outcome satisfying PJR is to take the $\beta+3$ copies of $d$. 

    Since $\elec_1$ and $\elec_2$ only differ by having additional copies of alternatives that exist at least $\beta+1$ times in both, they are indistinguishable for a sequential voting rule with $\beta$-lookahead when adding the first candidate to $W$. However, since the unique outcomes satisfying PJR in these elections, $W_1$ and $W_2$, are disjoint, we get that any $\beta$-sequential voting rule has to violate PJR in at least one of the two elections. \qed
\end{proof}

Almost all rules that we are aware of that have proportionality guarantees and are polynomial time computable are sequential.
For budget-based rules like EAR, MES, or Phragm\'en's Sequential Rule \cite{brill2024phragmen}, $W'$ is sufficient to calculate the remaining budgets of the voters, from which we can calculate the score of $c$. For example, for EAR this would be the largest threshold $t$ at which $c$ can be selected. For other voting rules, like sequential Thiele methods \cite{Thiele1895}, Greedy Monroe \cite{monroe1995fully}, and the Maximin Support Method \cite{sanchez2024maximin}, it is evident from their definition that they are sequential.

The only non-sequential polynomial time voting rule satisfying a justified representation axiom that we are aware of is a local-search variant of PAV \cite{aziz2017polynomialtimealgorithmachieveextended}. However, PAV and local-search PAV are only defined for approval preferences with unit costs and the natural extensions of PAV to general utilities or general costs are known not to satisfy justified representation axioms.\footnote{For general costs, see \cite{equalshares_fjr}. For general utilities, consider $n=k=2$, $m=3$, $u_1(c_1)=100, u_1(c_2)=500$, $u_2(c_1)=1,u_2(c_3)=2$, all other utilities $0$: Any PAV-style rule would choose $W=\set{c_1,c_2}$, violating PJR for $S=\set{2}$.} Since local-search PAV aims to approximate the PAV outcome, it does not satisfy justified representation axioms in the more general setting for the same reasons. It is an interesting open question whether a different polynomial-time, local search algorithm can overcome the barriers for sequential algorithms.

Thus, any voting rule that breaks either of these boundaries needs to be structurally quite different from any polynomial-time voting rule considered in the literature so far.

\section{Discussion}

In this work, we move substantially closer to fully understanding the computational boundaries of achieving justified representation in participatory budgeting and committee elections with additive utilities. 

The result that finding an outcome satisfying FJR is strongly \textbf{NP}-hard and the limitations for sequential rules bring us closer to resolving two open questions posed in literature: Rey et al. \cite[Sec. 5.1.1.2]{rey2025computationalsocialchoiceindivisible} ask whether there exists a pseudo-polynomial time algorithm for EJR with general utilities and general costs; now we know that for FJR, the answer is no, and for EJR, no sequential rule can succeed. Lackner and Skowron \cite[Q5]{lackner_overview} ask whether an outcome satisfying FJR can be found in polynomial time for {approval} utilities and unit costs; now we know that with utilities in $\set{0,1,2,3,4,5}$, the answer is no. We believe it is an interesting open question whether a sequential rule can satisfy FJR for approval utilities and unit costs.

While not satisfying a justified representation axiom may disqualify a voting rule, satisfying a justified representation axiom alone is not enough to guarantee that a rule will be desirable. For example, GCR is designed specifically to satisfy FJR but does not meet other desirable criteria, such as laminar proportionality \cite{equalshares_fjr}. Whether FJR can be achieved by a more natural voting rule, or whether there exist practical voting rules giving stronger proportionality guarantees than those by MES and EAR are interesting problems; this paper shows that many natural rules, namely sequential rules, are unable to do the latter.  We believe that EAR, well-known in committee elections with ordinal preferences and clustering, is a natural and practical rule, though more work on its behavior with respect to other axioms is necessary.

\begin{credits}
\subsubsection{\ackname} This paper evolved from a class project in the Fall 2025 offering of Ariel Procaccia's class \emph{Optimized Democracy}. We thank Ariel Procaccia and Bailey Flanigan for helpful discussions and suggestions.
\end{credits}
%
\bibliographystyle{splncs04}
\bibliography{bibliography}
\appendix
\crefalias{section}{appendix}
\renewcommand{\theHsection}{A\arabic{section}}
\section{Appendix for \Cref{sec:constant-n}}\label[appendix]{app:constant-n}

We include pseudocode for the Greedy Cohesive Rule below in \Cref{alg:gcr} for reference.

\begin{algorithm}[H]
\caption{Greedy Cohesive Rule (GCR) \cite{equalshares_fjr}}
\label{alg:gcr}
\begin{algorithmic}[1]
\State $W \gets \emptyset$ \Comment{Selected outcome}
\State $V \gets N$ \Comment{Active voters}

\While{true}
    \State Find $S \subseteq V, T \subseteq C, \beta \in \mathbb{N}_{> 0}$ such that $|S|/n \geq \cost(T)/B$ and $\beta = \min_{i \in S}\sum_{c \in T}  u_i(c)$
    \If{no such $\beta, S, T$ exist}
        \State \Return $W$
    \EndIf

    \State Choose $(S, T, \beta)$ maximizing $\beta$
    \State $W \gets W \cup T$
    \State $V \gets V \setminus S$
\EndWhile
\end{algorithmic}
\end{algorithm}

\subsection{Proof of Efficient GCR DP Subroutine for \Cref{lem:gcr-constant-n-runtime}}\label[appendix]{app:gcr-constant-n-runtime}

In \Cref{alg:fjr_dp}, we provide pseudocode for the DP routine discussed in \Cref{lem:gcr-constant-n-runtime} that precomputes the $(S, T, \beta)$ for GCR. We note that $T$ can be constructed via a companion table that stores the choice of items corresponding to each cell in the primary DP table, $A$. Then in \Cref{prop:dp-gcr-correct}, we show correctness of this DP procedure. For ease of notation (so we can refer to prefixes of the alternative set), enumerate the alternative set $C = \{c_1, \dots, c_m\}$.

\begin{algorithm}
\caption{DP for GCR $(S, T, \beta)$ Precomputation}
\label{alg:fjr_dp}
\begin{algorithmic}[1]
\For{each subset $S \subseteq N$, relabel the voters as $= \set{1, \dots, \ell}$ and}
    \State \textbf{Initialize:} $A[0][0,\dots,0] \gets 0$
    \State \textbf{Initialize:} $A[0][x_1,\dots,x_\ell] \gets \infty$ for all $(x_1,\dots,x_\ell) \neq (0,\dots,0)$
    
    \For{$j = 1, \dots, m$}
        \For{each $x_1, \dots, x_\ell \in \set{0, \dots, m \cdot u_{max}}$}
            \State $v_{\text{excl}} \gets A[j-1][x_1, \dots, x_\ell]$
            
            \If{$x_i - u_i(c_j) \geq 0$ for all $i \in \set{1, \dots, \ell}$}
                \State $v_{\text{incl}} \gets \text{cost}(c_j) + A[j-1][x_1-u_1(c_j), \dots, x_\ell-u_\ell(c_j)]$
            \Else
                \State $v_{\text{incl}} \gets \infty$ \Comment{Candidate sets cannot induce negative utility}
            \EndIf
            
            \State $A[j][x_1, \dots, x_\ell] \gets \min \set{v_{\text{excl}}, v_{\text{incl}}}$
        \EndFor
    \EndFor
    
    \State $\beta \gets \max \set{x \in \mathbb{N} : \exists x_1, \dots, x_\ell \geq x \text{ s.t. } A[m][x_1, \dots, x_\ell] \leq \frac{B \cdot \ell}{n}}$
\EndFor
\end{algorithmic}
\end{algorithm}

\begin{proposition}\label{prop:dp-gcr-correct}
    For every $S =\{1, \dots, \ell\} \subseteq N$, the DP table $A$ constructed in \Cref{alg:fjr_dp} is such that for all $j \in \{0, \dots m\}$ and $x_1, \dots, x_{|S|} \in \{0, \dots, m \cdot u_{max}\}$ \[
    A[j][x_1, \dots, x_{\ell}] = \min_{T \in \mathcal{T}_j \text{ s.t. } u_i(T) = x_i \forall i \in S} \cost(T)
    \]
    where $\mathcal{T}_j = 2^{\{c_1, \dots, c_j\}}$ is the set of all possible $T$ that only contain alternatives from the first $j$. 
\end{proposition}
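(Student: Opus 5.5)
The plan is induction on $j$, proving the displayed identity for all utility vectors simultaneously at each level. Throughout, the minimum over an empty family is read as $\infty$, and $u_i(T)$ for $i\in S$ uses the relabeled voters $1,\dots,\ell$. One observation makes the table range harmless: any $T\subseteq C$ has $u_i(T)\le m\cdot u_{max}$. So restricting each $x_i$ to $\{0,\dots,m\cdot u_{max}\}$ never excludes a feasible $T$. Since utilities are nonnegative integers, no $T$ can produce a negative coordinate either.

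\emph{Base case $j=0$.} Here $\mathcal{T}_0=\{\emptyset\}$ and $u_i(\emptyset)=0$ for all $i$. The right-hand side is therefore $\cost(\emptyset)=0$ at $(0,\dots,0)$, and every other vector has no witness, giving $\infty$. This is exactly the initialization.

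\emph{Inductive step.} Assume the claim for $j-1$ and fix $(x_1,\dots,x_\ell)$. Split $\mathcal{T}_j$ into the sets not containing $c_j$, which form $\mathcal{T}_{j-1}$, and the sets $T'\cup\{c_j\}$ with $T'\in\mathcal{T}_{j-1}$.
\begin{itemize}
\item For the first part, the minimum cost subject to $u_i(T)=x_i$ for all $i$ is $A[j-1][x_1,\dots,x_\ell]=v_{\text{excl}}$ by the induction hypothesis.
\item For the second part, additivity gives $u_i(T'\cup\{c_j\})=u_i(T')+u_i(c_j)$ and $\cost(T'\cup\{c_j\})=\cost(T')+\cost(c_j)$. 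So the constraint becomes $u_i(T')=x_i-u_i(c_j)$ for all $i$.
\begin{itemize}
\item If some $x_i-u_i(c_j)<0$, no $T'$ qualifies, because utilities are nonnegative; the minimum is $\infty$, matching $v_{\text{incl}}=\infty$.
\item Otherwise the shifted vector lies in the table range, and by the induction hypothesis the minimum is $\cost(c_j)+A[j-1][x_1-u_1(c_j),\dots,x_\ell-u_\ell(c_j)]=v_{\text{incl}}$.
\end{itemize}
\end{itemize}
The minimum over $\mathcal{T}_j$ is the smaller of the two parts' minima, which is the update rule.

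The only delicate point, and really the main obstacle, is the bookkeeping at the boundaries. Infeasible entries must propagate correctly as $\infty$, with $\infty+\cost(c_j)=\infty$. The shifted index must never fall outside the table, which the nonnegativity check ensures. A short remark also covers the reading of ``minimum over an empty set'' as $\infty$.

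Optionally, I would add the consequence used in Lemma~\ref{lem:gcr-constant-n-runtime}: the final $\beta$ line then correctly computes $\max_T\min_{i\in S}u_i(T)$ over $T$ with $\cost(T)\le B\ell/n$. This is because any feasible $T$ appears at its own exact utility vector with cost at most $\cost(T)$, and conversely every finite entry is witnessed by some set.
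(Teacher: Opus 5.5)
Your proposal is correct and follows the same induction on $j$ as the paper. The only difference is presentational: you compute the minimum directly by partitioning $\mathcal{T}_j$ according to whether $c_j$ is included, whereas the paper fixes a minimum-cost witness $T$, shows $v_{\text{excl}},v_{\text{incl}}\ge\cost(T)$ with one of them attaining it, and treats the infeasible case separately.
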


\begin{proof}
    Fix some $S = \{1, \dots, \ell\}$. The claim can be seen straightforwardly via induction. The base cases on lines 2 and 3 are accurate because when $j = 0$, $\mathcal{T}_j = \{\varnothing\}$. Therefore, the minimum over all $T \in \mathcal{T}_j$ of $T$ satisfying the utility conditions is $\cost(\varnothing)$ if the utilities are all 0, and infinity otherwise (because there is no $T \in \{\varnothing\}$ that can provide strictly positive utility for any voter).

    Now fix some $j \geq 1$ and assume that the induction hypothesis holds for all $k < j$. Fix any $x_1, \dots, x_{\ell} \in \{0, \dots, u_{max} \cdot m\}$, and refer to $u_i(T) = x_i$ for all $i \in S$ as the ``utility conditions.'' If there does not exist any $T \in \mathcal{T}_j$ that meets the utility conditions, then we know that there cannot exist a $T' \in \mathcal{T}_{j-1}$ such that \textit{either} (a) $u_i(T') = x_i - u_i(c_j)$ for all $i \in S$ \textit{or} (b) $u_i(T') = x_i$ for all $i \in S$. In (a), we could just set $T = T' \cup \{c_j\}$ and contradict the non-existence of a $T \in \mathcal{T}_j$ such that $u_i(T) = x_i$ for all $i \in S$. In the case of (b), we could just set $T = T'$ and arrive at the same contradiction. Hence, by the induction hypothesis, we have that $A[j-1][x_1 - u_1(c_j), \dots, x_{\ell} - u_{\ell}(c_j)] = A[j-1][x_1, \dots, x_{\ell}] = \infty$ so both $v_{incl}$ and $v_{excl}$ are set to infinity, as is $A[j][x_1, \dots, x_{\ell}]$ (which is as desired). 
    
    If there do exist sets of alternatives that satisfy the utility conditions, then fix some such $T \in \mathcal{T}_j$ with minimum cost among these sets. If $v_{excl} < \cost(T)$, then by the induction hypothesis, this would imply the existence of a set $T'$ meeting the utility conditions, and contradict the minimality of the cost of $T$ among such sets. Therefore $v_{excl} \geq \cost(T)$. Now consider $v_{incl}$. If $v_{incl} < \cost(T)$, then by the induction hypothesis there would have to exist a set $T'$ such that $T' \cup \{c_j\}$ meets the utility conditions, and $\cost(T') < \cost(T) - \cost(c_j) \implies \cost(T' \cup \{c_j\}) < \cost(T)$. This would also contradict the minimality of the cost of $T$. 
    
    If $c_j \in T$, then by the induction hypothesis, $A[j-1][x_1 - u_1(c_j), \dots, x_{\ell}-u_{\ell}(c_j)] = \cost(T) - \cost(c_j)$.  
    Therefore, $v_{incl} = \cost(c_j) + \cost(T) - \cost(c_j) = \cost(T)$. If $c_j \not \in T$, then we know that $T \in \mathcal{T}_{j-1}$, so $v_{excl} = \cost(T)$. Hence, either way, $A[j][x_1, \dots, x_{\ell}] \leftarrow \cost(T)$ on line 12, as desired. \qed
\end{proof}

\subsection{Proof of \Cref{lem:mgcr-ejr-sat}}\label[appendix]{app:mgcr-ejr-sat}

We include pseudocode for the modified Greedy Cohesive Rule below in \Cref{alg:mgcr} for reference.

\begin{algorithm}[H]
\caption{modified Greedy Cohesive Rule (mGCR) \cite{equalshares_fjr}}
\label{alg:mgcr}
\begin{algorithmic}[1]
\State $W \gets \emptyset$ \Comment{Selected outcome}
\State $V \gets N$ \Comment{Active voters}

\While{true}
    \State Find $S \subseteq V, T \subseteq C, \beta \in \mathbb{N}_{> 0}$ such that $|S|/n \geq \cost(T)/B$ and $\beta = \sum_{c \in T}  \min_{i \in S} u_i(c)$
    \If{no such $\beta, S, T$ exist}
        \State \Return $W$
    \EndIf

    \State Choose $(S, T, \beta)$ maximizing $\beta$
    \State $W \gets W \cup T$
    \State $V \gets V \setminus S$
\EndWhile
\end{algorithmic}
\end{algorithm}

\mgcrejrsat*

\begin{proof}
    First, we argue that any outcome returned by mGCR satisfies EJR. Fix an instance $\elec = (N, C, B, \cost, (u_i)_{i \in N})$.
    Let $W$ be the set returned by mGCR, and assume for the sake of contradiction that there exists an EJR violation: a set $S \subseteq N$ and $T \subseteq C$ such that $|S|/n \geq \cost(T)/B$ but $\max_{i \in S} u_i(W) < \sum_{c \in T} \min_{i \in S} u_i(c) := \beta$. 
    Let $d \in S$ be the first voter to have been deactivated in the run of mGCR, and note that the EJR violation implies that $u_d(W) < \beta$. 
    Note that such a voter must exist because mGCR would not terminate if all voters in $S$ were marked as active \,---\, it could mark them inactive by adding $T$ to $W$. 
    Let $\beta', S', T'$ be the values chosen on line 8 of \Cref{alg:mgcr} in the loop iteration that deactivated $d$. 
    As this is the loop where the first voter in $S$ is being deactivated, all voters in $S$ are active beforehand.
    Therefore, $\beta' \geq \beta$ because $(\beta, S, T)$ was a viable option, but $(\beta', S',T')$ was chosen instead. 
    By the condition on $\beta'$ from line 4, and the facts that $d \in S'$ and $T' \subseteq W$, we have that \[
    \beta' \leq u_d(T') \leq u_d(W) < \beta,
    \]
    a contradiction. Therefore there cannot be such an EJR violation.

    We also claim that mGCR will output an outcome that respects the budget constraint, so $\cost(W) \leq B$. 
    To see this, note that by the condition relating the size of $S$ to the cost of $T$, we have that whenever we add $T$ to $W$, we are deactivating at least $n \cdot \cost(T)/B$ voters. 
    Denote the set $T$ added at round $j$ as $T_j$. Then we have that $\cost(W) = \cost(\cup_j T_j) \leq \sum_j \cost(T_j)$ and $n \geq \sum_j n \cdot \cost(T_j)/B$ because the number of voters deactivated can't exceed $n$. 
    Rearranging yields that $B \geq \sum_j \cost(T_j) \geq \cost(W)$. \qed
\end{proof}

\subsection{Proof of \Cref{thm:fjr-weak-hardness}}\label[appendix]{app:fjr-weak-hardness}

\fjrweakhardness*

We prove this theorem by showing that a polynomial-time computable voting rule that returns an outcome satisfying FJR for any election with only $n=2$ voters can be used to devise a polynomial-time algorithm for a restricted version of the Knapsack problem. We prove that this restricted version of Knapsack is still $\NP$-hard by reducing from the original Knapsack problem. We start by formally defining \textsc{Restricted-Knapsack}.

\alg{\textsc{Restricted-Knapsack:} Given are $m$ items $((v_j,w_j))_{j\in[m]} \in (\N^2)^m$ and $s,v\in \N^+$ such that
\begin{align*}
    (v_j,w_j)=(v,v)  &\quad \forall  j\in[s],\\
    (v_j,w_j)=(0,0) &\quad \forall  j\in\set{s+1,...,2s},\\
    v_j\leq v, w_j\leq v &\quad \forall j\in [m].
\end{align*} Furthermore, we are guaranteed that for any $K'\subseteq [m]$ of size $\abs{K'}=2s$ such that 
$[2s]\neq K'$, it holds that $\sum_{j\in K'}v_j\neq vs$ and 
$\sum_{j\in K'}w_j\neq vs$. Does there exist $K\subseteq [m]$ of size $\abs{K}=2s$ such that $\sum_{j\in K} v_j \geq vs+1$ and 
$\sum_{j\in K} w_i \leq vs-1$?
} 

\begin{lemma}\label{lem:restricted-Knapsack-Hardness}
    \textsc{Restricted-Knapsack} is $\NP$-hard.
\end{lemma}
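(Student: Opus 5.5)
We reduce from the exact-cardinality decision version of Knapsack. It is NP-hard (weakly) to decide, given items $(a_j,b_j)_{j\in[p]}$ with values $a_j$, weights $b_j$, a cardinality $q$, a value target $V$ and a capacity $W$, whether some $K\subseteq[p]$ with $\abs{K}=q$ satisfies $\sum_{j\in K}a_j\ge V$ and $\sum_{j\in K}b_j\le W$. The cardinality constraint is harmless: adding $p$ dummy items $(0,0)$ and setting $q=p$ recovers ordinary Knapsack. The plan has three steps. First, rescale the Knapsack numbers so that every non-dummy sum avoids two "forbidden" exact values. Second, pad the instance with the $s$ items $(v,v)$ and the $s$ items $(0,0)$ required by \textsc{Restricted-Knapsack}. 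Third, shift the values and weights so that the strict thresholds $vs+1$ and $vs-1$ encode exactly the original $\ge V$ and $\le W$.

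\textbf{Construction.} Multiply every original value and weight by a large factor $M$, say $M=4$, and add a fixed offset. This makes every sum of original items congruent to something that $vs$ can never equal, so sums hitting exactly $vs$ are only possible via the padding. Concretely:
\begin{itemize}
\item Use $s=q$ padding items $(v,v)$ and $s$ items $(0,0)$.
\item For each original item $j$, create an item with value $v'_j = M a_j + \delta_j$ and weight $w'_j = M b_j + \epsilon_j$. The offsets are chosen so that any set $K'$ of size $2s$ using $t\ge1$ original items has value sum $\equiv t\cdot(\text{offset})\not\equiv 0 \pmod M$, or otherwise differs from $vs$.
\item Pick $v$ a large multiple of $M$ exceeding all $v'_j,w'_j$, so the requirement $v_j,w_j\le v$ holds.
\end{itemize}

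\textbf{Equivalence.} A feasible $K$ of size $2s$ mixes some padding items with original items. Only original items can push value above $vs$ while keeping weight below $vs$, because the padding contributes equally to both sums. A $(v,v)$ item removed in favour of original item $j$ therefore changes the value by $v'_j-v$ and the weight by $w'_j-v$. To make this correspond to the original inequalities, I would instead set original items to $(v - D + Ma_j + 1,\; v - D' + Mb_j)$ for suitable $D,D'$. Choosing exactly $q$ originals to replace the $(v,v)$ block then yields value $vs+\sum_{j\in K}Ma_j - qD + q$, and this is $\ge vs+1$ iff $\sum_{j\in K}a_j\ge V$ when $D=(MV+q-1)/q$-type calibrations hold. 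To avoid divisibility issues, I would rather encode the thresholds with an extra single item carrying $-MV$. Replacing zero-items instead of $(v,v)$ items must also be handled, which means bounding item magnitudes so that mixtures other than "swap the $(v,v)$ block" are either infeasible or equivalent.

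\textbf{Main obstacle.} The delicate step is the genericity guarantee: every $2s$-subset other than $[2s]$ must have value sum and weight sum different from $vs$. I would achieve this by giving each original item value $\equiv 1$ and weight $\equiv 1 \pmod{M}$ with $M>2s$. Any $K'$ containing $t\in[1,2s]$ originals then has sums $\equiv t\not\equiv 0\pmod M$, while $vs\equiv 0$. Subsets with $t=0$ other than $[2s]$ do not exist, since the only non-original items are the $2s$ padding items. The strict thresholds $\pm1$ then fit naturally with the unit residues, and the calibration in the equivalence step must be redone modulo $M$ accordingly.
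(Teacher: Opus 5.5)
Your proposal has the right ingredients: the $(v,v)$/$(0,0)$ padding, scaling, a residue argument for genericity, and, in passing, a single extra item carrying the threshold, which is exactly the paper's device. But you never fix an instance, and the step that carries the equivalence is deferred (``must also be handled'', ``must be redone modulo $M$'').

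The missing piece is the structural claim behind the direction from a \textsc{Restricted-Knapsack} solution back to Knapsack. Every $K$ with $|K|=2s$, value $\ge vs+1$ and weight $\le vs-1$ must consist of exactly $s-1$ of the $(v,v)$ items, the threshold item, and $s$ further slots filled by originals and $(0,0)$ items. The paper proves this by a three-case count. All $s$ heavy items plus anything of positive weight overshoots $vs$ in weight. At most $s-2$ heavy items, or $s-1$ without the threshold item, undershoots $vs$ in value. This needs $v$ to exceed the \emph{total} of all the small numbers, not just each $v'_j,w'_j$ as you require.

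Neither of your explicit encodings survives this count. With small originals $(Ma_j+\delta_j,Mb_j+\epsilon_j)$ and no threshold item, no solution exists at all: keeping all heavy items overshoots in weight, and dropping one undershoots in value. So every instance maps to NO. With originals $(v-D+Ma_j+1,\,v-D'+Mb_j)$, any solution consists of $s$ ``big'' items and the $s$ zero items. But the offset $-D$ is paid once per chosen original, so the value is $vs+M\sum_K a_j+t(1-D)$, where $t$ is the number of originals in $K$. The instance does not fix $t$, so the effective threshold depends on it. This is a genuine obstruction, not a divisibility nuisance. Reducing from exact-cardinality Knapsack does not help, since nothing in the target forces $t=q$.

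Your genericity argument is also proved for the wrong instance. Once the threshold item is present, ``the only non-original items are the $2s$ padding items'' is false. That item must be a nonnegative pair bounded by $v$, such as $(v-MV+c_1,\,v-MW-c_2)$, not an item ``carrying $-MV$''. Its residue has to be controlled too.

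Your residue idea can be completed as follows:
\begin{itemize}
\item Make every non-padding item $\equiv 1\pmod M$ in both coordinates, with $M\ge 2s+1$, $v\equiv 0\pmod M$, and $s$ equal to the number of originals. For example, take originals $(Ma_j+1,Mb_j+1)$ and threshold $(v-MV+1,\,v-MW-M+1)$.
\item Then every $2s$-set other than $[2s]$ has both sums $\equiv t'\not\equiv 0$, where $1\le t'\le s+1$ counts its non-padding items.
\item A solution of the forced shape has value $vs+M(\sum_K a_j-V)+1+t$ and weight $vs+M(\sum_K b_j-W)-M+1+t$, with $0\le t\le s$.
\item These reach $\ge vs+1$ and $\le vs-1$ exactly when $\sum_K a_j\ge V$ and $\sum_K b_j\le W$, respectively.
\end{itemize}
Combined with the three-case count, this would be a valid variant of the paper's proof. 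The paper does the same more economically: it scales by $2$, takes the threshold item $(v-2\hat T+1,\,v-2\hat B-1)$ with $v$ even, and uses parity only for the one configuration that magnitudes cannot rule out. As written, however, your proposal contains neither a construction nor this case analysis, so the reduction is not established.
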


    We reduce from the standard Knapsack problem, known to be $\NP$-hard \cite{Karp1972}. We assume that all values and weights are positive integers; it is straightforward to check that this assumption makes the problem no easier. \alg{\textsc{Knapsack}: Given are $\hat{m}$ items $((\hat{v}_i,\hat{w}_i))_{i\in[\hat{m}]} \in ({\N^+}^2)^{\hat{m}}$ and $\hat{T},\hat{B} \in \N^+$. Does there exist $\hat{K}\subseteq [\hat{m}]$ such that $\sum_{i\in \hat{K}} \hat{v}_i \geq \hat{T}$ and $\sum_{i\in \hat{K}} \hat{w}_i \leq\hat{B}$?}  

  \begin{proof}[\Cref{lem:restricted-Knapsack-Hardness}]  Given an instance of \textsc{Knapsack}, we create the following instance of \textsc{Restricted-Knapsack}: Let $m=3\hat{m}+1$, $s=\hat{m}$, and $v=2 \paran{\hat{T} + \hat{B} + \sum_{i\in [\hat{m}]} \hat{v}_i+ \sum_{i\in [\hat{m}]} \hat{w}_i + 1}$. The first $s$ items with $(v_i,w_i)=(v,v)$ for $i \in [s]$ are called the `heavy' items, and the next $s$ items with $(v_i,w_i)=(0,0)$ for $i \in \{s+1, \dots, 2s\}$ are called the `filler' items. For every $i \in [\hat{m}]$ and the corresponding element in the original \textsc{Knapsack} instance, $(\hat{v}_i,\hat{w}_i)$, add an element $({v}_{i+2s},{w}_{i+2s})=(2\hat{v}_i,2\hat{w}_i)$ to the new \textsc{Restricted-Knapsack} instance. Finally, add $({v}_{3s+1},{w}_{3s+1})=(v-2\hat{T}+1,v-2\hat{B}-1)$, and call it the `threshold' item.

    First, let us verify that this is a valid instance of \textsc{Restricted-Knapsack}. It holds that $0\leq v_i,w_i\leq v$ for all $i$. Furthermore, if a set $K'\subseteq [m]$ of size $2s$ includes all $s$ heavy items and any item with $i>2s$, we know $\sum_{i\in K'}v_i,\sum_{i\in K'}w_i>vs$. If it includes at most $s-2$ of the $s$ heavy items or exactly $s-1$ of the $s$ heavy items but not the threshold item, we know that $\sum_{i\in K'}v_i,\sum_{i\in K'}w_i<vs$. Lastly, if it includes exactly $s-1$ of the $s$ heavy items and the threshold item, we know that $\sum_{i\in K'}v_i,\sum_{i\in K'}w_i \equiv 1 \not\equiv 0 \equiv vs \mod 2$ so $\sum_{i\in K'}v_i,\sum_{i\in K'}w_i \neq vs$. We can thus conclude that for every $K'\subseteq [m]$ of size $k$, $K'\neq [2s]$, it holds that $\sum_{i\in K'}v_i \neq vs$ and $\sum_{i\in K'}w_i \neq vs$.

    Let us now show that the original \textsc{Knapsack} instance is a YES instance if and only if the \textsc{Restricted-Knapsack} instance is a YES instance. 

    ``$\Rightarrow$'': Let $\hat{K}$ be a solution to the \textsc{Knapsack} instance. We construct the following candidate solution set $K$ for the \textsc{Restricted-Knapsack} instance: the set of elements corresponding to $\hat{K}$, i.e. $i+2s\in K$ if $i\in \hat{K}$, together with $s-1$ heavy items, the threshold item, and $s-\lvert\hat{K}\rvert\in [s]$ filler items. We get a set $K\subseteq [m]$ of $\lvert\hat{K}\rvert + (s-1) + 1 + (s- \lvert\hat{K}\rvert) = 2s$ items such that $$\sum_{i\in K} v_i = vs - 2\hat{T}+1 + \sum_{i\in \hat{K}}2\hat{v}_i \geq vs + 1,$$ $$\sum_{i\in K} w_i = vs - 2\hat{B}-1 + \sum_{i\in \hat{K}}2\hat{w}_i \leq vs - 1.$$

    ``$\Leftarrow$'': Let $K$ be a solution to the \textsc{Restricted-Knapsack} instance. We let $\hat{K}$ be the set of all non-heavy, non-filler, non-threshold items in $K$, that is for any $i\in[\hat{m}]$, it holds that $i\in \hat{K}$ if and only if $i+2s \in K$. As argued above, we know that $\sum_{i\in K} v_i \geq vs + 1$ and $\sum_{i\in K} w_i \leq vs - 1$ is only possible if $K$ includes $s-1$ heavy items and the threshold item. Thus, we get that $$vs - 2\hat{T}+1 + \sum_{i\in \hat{K}} 2\hat{v}_i = \sum_{i\in K} v_i \geq vs + 1,$$ $$vs - 2\hat{B}-1 + \sum_{i\in \hat{K}}  2\hat{w}_i = \sum_{i\in K} w_i \leq vs - 1.$$
    This implies that $\sum_{i\in \hat{K}} \hat{v}_i\geq \hat{T}$ and $\sum_{i\in \hat{K}} \hat{w}_i\leq \hat{B}$, a solution to the knapsack problem.\qed
\end{proof}

With \Cref{lem:restricted-Knapsack-Hardness} in hand, we can proceed to prove \Cref{thm:fjr-weak-hardness}.

\begin{proof}[\Cref{thm:fjr-weak-hardness}]
Let us assume that we have a polynomial time voting rule that satisfies FJR on all committee election instances with $n=2$. We will show that we can use this voting rule to devise a polynomial time algorithm for \textsc{Restricted-Knapsack}, thus obtaining $\P=\NP$ by \Cref{lem:restricted-Knapsack-Hardness}.

Given an instance of \textsc{Restricted-Knapsack}, we create the following committee election instance. Let $N=\set{1,2}$. For all $j\in [m]$, create an alternative $c_j$ such that $u_1(c_j)= v_j \in \set{0,...,v}$ and $u_2(c_j)=v-w_j\in\set{0,...,v}$. In our language from the main part of the paper, $c_j$ for $j\in[s]$ and $j\in[s+1,2s]$ are the selfish alternatives for voters $1$ and $2$, respectively. We set $k=2s$.

We let $W_0=\set{c_j}_{j \in [2s]}\subseteq C$ be the committee of size $k$ (the selfish committee) corresponding to the $k$ ``seeded'' items in the \textsc{Restricted-Knapsack} instance. We know that $u_1(W_0)=u_2(W_0)=vs$. We will show that if the \textsc{Restricted-Knapsack} instance is satisfiable, $W_0$ does not satisfy FJR; but that if the \textsc{Restricted-Knapsack} instance is unsatisfiable, $W_0$ is the unique committee that satisfies FJR. This implies a polynomial-time algorithm for \textsc{Restricted-Knapsack}: Let $W\subseteq C$, $\abs{W}=k$ be any committee satisfying FJR as returned by our polynomial-time voting rule;
we know that the \textsc{Restricted-Knapsack} instance is satisfiable if and only if $W\neq W_0$.

Let us first assume that the \textsc{Restricted-Knapsack} instance is a YES instance, that is, there exists $K\subseteq [m]$ of size $2s=k$ such that $\sum_{j\in K} v_j\geq vs + 1$ and $\sum_{j\in K} w_j\leq vs-1$. Consider $S=\set{1,2}$ and $T=\set{c_j}_{j\in K}$. We get that $\abs{S}/n\geq \abs{T}/k$ (as both are $1$) and $$u_1(T) = \sum_{c_j\in T} u_1(c_j) = \sum_{j\in K} v_j \geq vs + 1,$$ 
$$u_2(T) = \sum_{c_j\in T} u_2(c_j) = \sum_{j\in K} (v - w_j) \geq vk - (vs - 1)=vs+1.$$ 
Thus, $\min_{i\in S} u_i(T) \geq vs+1 > vs = \max_{i\in S} u_i(W_0)$, so $W_0$ does not satisfy FJR. 

Now, assume that the \textsc{Restricted-Knapsack} instance is unsatisfiable. Let $W\subseteq C$ be a committee of size $k$, other than $W_0$. Consider $K=\set{j: c_j \in W}$. The \textsc{Restricted-Knapsack} instance being unsatisfiable implies that $\sum_{j\in K} v_j < vs+1$ or $\sum_{j\in K} w_j > vs-1$. We know that $\sum_{j\in K} v_j\neq vs$ and $\sum_{j\in K} w_j\neq vs$, so it needs to hold that either $\sum_{j\in K} v_j \leq vs-1$ or $\sum_{j\in K} w_j \geq vs+1$. We get that either 
$$u_1(W)=\sum_{c_j\in W} u_1(c_j) = \sum_{j\in K} v_j \leq vs - 1$$ 
or 
$$u_2(W)=\sum_{c_j\in W} u_2(c_j) = \sum_{j\in K} (v-w_j) \leq vk-(vs +1) = vs-1.$$ 
Let $i^*$ be said voter for which $u_{i^*}(W)\leq vs-1$ and consider $S=\set{i^*}$ and $T=\set{c_{s(i^*-1)+1},...,c_{si^*}}$. We get that $\abs{S}/n\geq \abs{T}/k$ (as both are $1/2$) and $\min_{i\in S} u_i(T) = vs > vs-1 \geq \max_{i\in S} u_i(W)$, so $W$ doesn't satisfy FJR. Since adding alternatives to a committee cannot introduce new FJR violations, this implies that no committee $W\subseteq C$ of size \emph{at most} $k$, other than $W_0$, can satisfy FJR. There always exists a committee of size $k$ satisfying FJR \cite{equalshares_fjr}, so we get that $W_0$ is the unique committee satisfying FJR. \qed
\end{proof}

\subsection{Extension to Elections with Bounded Voter Types}\label[appendix]{app:bounded-voter-types}

\Cref{thm:fjr-constant-n} and \Cref{thm:mgcr-constant-n} extend to the setting with an arbitrary number of voters, as long as the number of distinct utility functions, which we call the \textit{number of voter types}, is bounded. Crucially, the definition of PJR, EJR and FJR does not depend on the number of voters of a given type, but only on the different types of voters present in a given coalition.

\begin{theorem}\label{thm:fjr-bounded-voter-types}[Extension of \Cref{thm:fjr-constant-n}]
    For all $\elec$ with a constant number of voter types, polynomial utilities, and general costs, an outcome satisfying FJR can be found in polynomial time.
\end{theorem}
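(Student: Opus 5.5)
The plan is to run GCR at the level of voter types rather than individual voters. The efficient implementation behind \Cref{lem:gcr-constant-n-runtime} relied on $n \in O(1)$ only in two places: enumerating all coalitions $S \subseteq N$, and the dimension of the DP table $A$, which has one coordinate per voter in $S$. Let $\tau_1,\dots,\tau_p$ be the distinct utility functions, with $p \in O(1)$, and let $n_t$ be the number of voters of type $t$. The key observation is that a coalition $S$ enters the GCR selection only through two quantities. The first is its size $\abs{S}$, which determines the budget bound $\cost(T) \le B\abs{S}/n$. The second is the set of types it contains, since $\min_{i\in S} u_i(T)$ depends only on which types appear in $S$. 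So a coalition is summarized by a vector $(s_1,\dots,s_p)$ with $0 \le s_t \le n_t$.

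First I will show that, at any point in GCR, among all coalitions of active voters whose \emph{support} (the set of types present) is a fixed $\Sigma \subseteq [p]$, it suffices to consider the largest one. This is the coalition containing all active voters of every type in $\Sigma$. It has the same $\beta$-objective as any smaller coalition with support $\Sigma$ and a weakly larger budget share. I will therefore define the \emph{type-level GCR}: at each step it chooses a support $\Sigma$ and takes $S$ to be all currently active voters of types in $\Sigma$. I need to check that this is still a valid run of GCR, or at least that it satisfies FJR. The cleanest route is to reprove FJR directly, mirroring the proof of \Cref{lem:mgcr-ejr-sat} with $\min_{i\in S} u_i(T)$ in place of $\sum_{c\in T}\min_{i\in S}u_i(c)$. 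Suppose $(S,T)$ violates FJR, and let $d\in S$ be the first voter of $S$ to be deactivated, in the iteration choosing $(S',T',\beta')$. Just before that iteration, the coalition of all active voters with the support of $S$ is a superset of $S$, so it is feasible and has value at least $\beta := \min_{i\in S}u_i(T)$. Hence $\beta' \ge \beta$. But $\beta' \le u_d(T') \le u_d(W) < \beta$, a contradiction. The budget argument carries over verbatim, because deactivating larger coalitions only strengthens the counting inequality $\sum_j n\cost(T_j)/B \le n$.

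Second, I will bound the running time. Each iteration deactivates at least one type entirely, because it removes all active voters of the chosen types. So there are at most $p$ iterations. Within an iteration there are $2^p$ candidate supports. For each support $\Sigma$, I run the DP of \Cref{alg:fjr_dp} with one coordinate per type in $\Sigma$. The table has $O(m^{p+1}u_{max}^p)$ cells, and the budget threshold $B\abs{S}/n$ is computed from the current active counts. The total running time is $O(p\,2^p\, m^{p+1}u_{max}^p\cdot \poly{\log(m c_{max}), \log n})$, which is polynomial in $\abs{\elec}$ because $p\in O(1)$ and utilities are polynomial. The running-time computation should need only a small adaptation. I expect the main obstacle to be justifying the replacement of arbitrary coalitions by maximal ones. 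That replacement is exactly the step in the FJR proof where "$S$ all active" is used, and I must make sure the maximal coalition taken at the moment $d$ is deactivated really contains $S$ and respects the budget bound.
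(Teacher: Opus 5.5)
Your proposal is correct. It shares the paper's core idea: run the DP of Algorithm~\ref{alg:fjr_dp} with one coordinate per voter type, and simulate GCR at the level of types. The two differ in what is enumerated.

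The paper iterates over all pairs (subset of active types, coalition size), of which there are $O(2^p n)$. For each pair it reads off the best $\beta$ under budget $B\cdot\abs{S}/n$, then picks an arbitrary set of voters realizing the winning pair. This reproduces an arbitrary run of GCR, so FJR follows from the GCR guarantee of Peters et al.

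You instead observe that the largest coalition with a given support dominates all others with that support. So you need only $2^p$ candidates per round, and since each round removes at least one type entirely, there are at most $p$ rounds.

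Your type-level rule is itself a legitimate run of GCR. The maximum $\beta$ over maximal coalitions equals the maximum over all coalitions, so always choosing a maximal coalition is just a particular tie-break. You could therefore cite the GCR guarantee directly instead of reproving FJR. Your reproof is fine, but if you keep it, add the step (present in the mGCR proof) showing that some voter of $S$ is eventually deactivated. It holds because otherwise, at termination, all active voters of $S$'s types together with $T$ would still form an admissible tuple with $\beta = \min_{i\in S}u_i(T) > \max_{i\in S}u_i(W) \geq 0$, hence $\beta \geq 1$.

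Your route makes the running time depend on $n$ only through arithmetic on the type counts. It therefore stays polynomial even if the instance is encoded compactly as types with multiplicities, whereas the paper's enumeration over coalition sizes does not. The paper's route, in turn, is faithful to every possible execution of GCR, not just one tie-breaking.
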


\begin{proof}
    Since the definition of FJR only depends on the set of voter types and the total number of voters present in a given coalition $S \subseteq N$ (but not on the number of voters of each type), we can similarly implement GCR in polynomial time. It suffices to modify the dynamic program of \cref{alg:fjr_dp} in the following ways: for a given set $S \subseteq N$, we enumerate the $\ell$ distinct types of voters present in it, and create a DP table with entries $A[j][x_1, \dots, x_\ell]$ for $j \in \{0, \dots, m\}$ and $x_1, \dots, x_\ell \in \{0, \dots, m \cdot u_{max} \}$, which store the minimum cost of a subset of the first $j$ alternatives $T \subseteq \{c_1, \dots, c_j\}$ that achieves utility exactly $x_i$ for the $i$\textsuperscript{th} voter type in $S$ for all $i \in \{1, \dots, \ell\}$. The analysis of the correctness and runtime of this modified dynamic program for the precomputation of the $\beta$-maximizing tuple $(S,T,\beta)$ for a given $S \subseteq N$ is completely analogous to \Cref{prop:dp-gcr-correct}. The corresponding set of alternatives for each cell can similarly be stored in a companion table with matching entries. Finally, it remains to adapt the implementation of the greedy cohesive rule in the following ways: instead of going through the $\beta$-maximizing tuples $(S,T,\beta)$ for all coalitions $S$, we only go through all subsets of active voter types and all feasible sizes of coalitions, of which there are $O(2^{t}n)$ (where $t$ is the number of voter types), and choose the one which maximizes $\beta$ as precomputed by the above DP. Then we can choose an arbitrary subset of voters that realizes this set of voter types. Crucially, the bound on $t$ means that only $\poly{|\elec|}$ tuples need to be computed.  A voter type is deactivated once all voters of this type are deactivated. The analysis of the correctness and runtime of this adapted GCR algorithm is otherwise completely analogous to \Cref{thm:fjr-constant-n}. \qed
\end{proof}

\begin{theorem}\label{thm:mgcr-bounded-voter-types}[Extension of \Cref{thm:mgcr-constant-n}]
     For all $\elec$ with a number of distinct voter types $t \in O(\log m)$, general utilities, and polynomial costs, an outcome satisfying EJR can be found in polynomial time. 
\end{theorem}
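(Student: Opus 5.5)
We will mirror the proof of \Cref{thm:fjr-bounded-voter-types}, now for mGCR, and reuse \Cref{lem:mgcr-ejr-sat} and the Knapsack subroutine from \Cref{lem:mgcr-runtime}. The key observation is that for a coalition $S$, the EJR quantity $\sum_{c\in T}\min_{i\in S}u_i(c)$ depends only on the set of voter types in $S$. The constraint $\abs{S}/n\geq \cost(T)/B$ depends only on $\abs{S}$. So every candidate tuple $(S,T,\beta)$ in mGCR is determined, up to the identity of the voters, by a pair $(\tau,s)$. Here $\tau$ is a nonempty subset of the $t$ types and $s\in[n]$ is the coalition size. Such a pair is realizable from the active voters iff $\abs{\tau}\leq s\leq \sum_{\theta\in\tau}a_\theta$, where $a_\theta$ is the current number of active voters of type $\theta$.

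The first step is precomputation. For each $\tau$ we set $v_\tau(c):=\min_{\theta\in\tau}u_\theta(c)$ and run the standard Knapsack DP once, over budgets $b\in\set{0,\dots,B}$. This yields, for every $b$, the maximum $\sum_{c\in T}v_\tau(c)$ with $\cost(T)\le b$, together with a witness $T$. A single table then answers all sizes $s$ through the budget $\lfloor Bs/n\rfloor$. One table has $O(m\cdot B)=O(m^2c_{max})$ cells with $O(\log(m u_{max}))$-bit entries. Since $2^t=\poly{m}$ for $t\in O(\log m)$ and $c_{max}$ is polynomial, all tables take polynomial time.

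The second step is to run mGCR on type counts. In each iteration we enumerate all $O(2^t n)$ realizable pairs $(\tau,s)$ and read off $\beta(\tau,s)$. We pick the maximum, provided it is positive, and add the witness $T$ to $W$. We then deactivate any $s$ active voters whose types cover exactly $\tau$: one of each type in $\tau$, filled up arbitrarily with other active voters of those types. There are at most $n$ iterations, since each deactivates at least one voter.

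The main check is that this is a faithful implementation of mGCR, so that \Cref{lem:mgcr-ejr-sat} applies verbatim. Any feasible $(S,T,\beta)$ over active voters corresponds to a realizable $(\tau,s)$ with the same $\beta$-optimum. Conversely, any realizable pair yields an actual coalition $S$ of active voters of size $s$ with type set $\tau$. Hence the chosen tuple maximizes $\beta$ among all valid tuples, and ties do not matter for \Cref{lem:mgcr-ejr-sat}. Budget feasibility follows exactly as in that lemma, because we deactivate $s\ge n\cost(T)/B$ voters.

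The one subtle point is that the input size is $\poly{n,m,\log c_{max},\log u_{max}}$, and $n$ may be large. However, $n$ enters only linearly, through the enumeration of $s$ and the iteration count, so the total runtime is $\poly{\abs{\elec}}$.
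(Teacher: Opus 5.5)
Your proposal is correct and is essentially the paper's proof: enumerate the $O(2^t n)$ pairs of type-subsets and coalition sizes, run the Knapsack DP on $v_\tau(c)=\min_{\theta\in\tau}u_\theta(c)$, and execute mGCR on these pairs so that \Cref{lem:mgcr-ejr-sat} applies. One small fix: your realizability condition must also require $a_\theta\ge 1$ for every $\theta\in\tau$, i.e.\ $\tau$ ranges over subsets of currently \emph{active} types as the paper states, since otherwise your deactivation step (one active voter of each type in $\tau$) can be impossible.
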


\begin{proof}
     Since the definition of EJR and the mGCR objective only depend on the set of voter types and the total number of voters present in a given coalition $S \subseteq N$ (but not on the number of voters of each type), we can similarly implement mGCR in polynomial time in this setting. It again suffices to consider the subsets $S'$ of voter types as well as the possible sizes $\ell$ of coalitions, and analogously define $v_{S'}(c)$ to be the minimum utility for alternative $c$ by any voter type in $S'$. As in \Cref{thm:mgcr-constant-n}, we can use the standard Knapsack DP to find a $\beta$-maximizing tuple $(S',T,\beta)$, and adapt the execution of mGCR by iterating through the $O(2^{t}n) = O(m^{O(1)} n)$ possible subsets of active voter types and feasible coalition sizes. \qed
\end{proof}

\subsection{Elections with Constant Budget}\label[appendix]{app:constant-budget}

\begin{theorem}\label{cor:fjr-k}
    Let $B$ be a fixed integer. Given any participatory budgeting instance with budget $B$, we can find an outcome satisfying FJR in time $O(m^{2B} n^{2})$.
\end{theorem}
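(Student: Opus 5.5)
We will implement GCR directly and exploit that, with budget $B$ constant, every set $T$ that some coalition can demand is small. Since all costs are in $\N^+$, any $T$ with $\cost(T)\le B$ has $\abs{T}\le B$. So the family $\mathcal{T}=\set{T\subseteq C : \cost(T)\le B}$ has size $\sum_{j=0}^{B}\binom{m}{j}=O(m^B)$. It can be enumerated in that time, up to polynomial factors absorbed into the stated bound.

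The first step is to remove the exponential dependence on $S$. For a fixed $T$ and a fixed threshold $\beta$, define the set of active voters who like $T$ enough,
\[
A(T,\beta)=\set{i\in V : u_i(T)\ge \beta}.
\]
A tuple $(S,T,\beta)$ is feasible in GCR if and only if $\abs{A(T,\beta)}\ge n\cost(T)/B$. In that case we may take $S$ to be any $\lceil n\cost(T)/B\rceil$ voters of $A(T,\beta)$; indeed GCR's correctness only needs $\min_{i\in S}u_i(T)\ge\beta$, and the FJR proof of \cite{equalshares_fjr} goes through verbatim with ``$\ge\beta$''. Equivalently, for fixed $T$, sort the active voters by $u_i(T)$ in decreasing order and let $s=\lceil n\cost(T)/B\rceil$. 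The best achievable $\beta$ for this $T$ is then the $s$-th largest value $u_i(T)$, and the corresponding $S$ consists of the top $s$ voters.

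Each GCR iteration therefore loops over all $T\in\mathcal{T}$ and, for each, computes $u_i(T)$ for all active $i$ in $O(nB)$ time. It selects the $s$-th largest value by linear-time selection (or by sorting, which costs an extra log factor), keeps the $T$ maximizing this $\beta$ with $\beta\ge 1$, adds $T$ to $W$, and deactivates the chosen $S$. One iteration costs $O(m^B n)$, treating $B$ as a constant.

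Next we bound the number of iterations. Each iteration deactivates at least one voter, because $\beta\ge1$ forces $T\neq\emptyset$, so $\cost(T)\ge1$ and $\abs{S}\ge n/B>0$. Hence there are at most $n$ iterations, and the total running time is $O(m^B n^2)$, which lies within the claimed $O(m^{2B}n^2)$.

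Correctness follows because the procedure is exactly GCR with a particular tie-breaking among maximizers. GCR satisfies FJR by \cite[Prop.~3]{equalshares_fjr}, and its output is feasible by the same counting argument as in the proof of \Cref{lem:mgcr-ejr-sat}.

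The main subtlety is justifying that restricting to the top-$s$ voters loses no optimal tuple. Shrinking $S$ can only increase $\min_{i\in S}u_i(T)$, and any feasible $S$ for $T$ has size at least $s$. So the maximum of $\min_{i\in S}u_i(T)$ over feasible $S\subseteq V$ is attained by the top $s$ voters.
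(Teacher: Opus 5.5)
Your argument is correct, but it takes a different route from the paper. The paper does not implement GCR here. It first proves (\Cref{thm:fjr-k-verification}) that FJR can be \emph{verified} in $O(m^B n^2)$ time for constant $B$: for each $T$ of cost at most $B$ and each voter $i$, it tests the single candidate coalition $\{j : u_j(W)\le u_i(W) < u_j(T)\}$. It then brute-forces over all $O(m^B)$ feasible outcomes and returns the first one that passes, using \Cref{prop:fjr-existance} for existence. The two enumerations multiply to give $O(m^{2B}n^2)$.

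You instead run GCR directly. Your key observation is that, for fixed $T$, the best coalition is the top $s=\lceil n\cost(T)/B\rceil$ active voters by $u_i(T)$. This is essentially the same threshold trick the paper uses inside its verifier to avoid enumerating $2^n$ coalitions. Your route removes the outer enumeration over outcomes and gives $O(m^Bn^2)$, a factor $m^B$ better. It also outputs the GCR outcome specifically, rather than whichever FJR outcome comes first in the enumeration. The paper's route needs nothing about GCR beyond existence, and it yields a polynomial-time FJR verifier for constant $B$ as a by-product.

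Two small points. First, you should explicitly skip any $T$ for which fewer than $s$ voters are still active, since no $s$-th largest value exists then. Second, your iteration count can be sharpened from $n$ to $B$. Every round deactivates at least $n\cost(T)/B\ge n/B$ voters, and at most $n$ voters are ever deactivated, so the total running time is in fact $O(m^B n)$ for constant $B$.
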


To prove this, we show that verifying whether a committee satisfies FJR can be done efficiently for constant $B$:

\begin{lemma}\label{thm:fjr-k-verification}
   Let $B$ be a fixed integer. Given any participatory budgeting instance with budget $B$ and an outcome $W$ of cost at most $B$, we can verify in time $O(m^{B}n^2)$ whether $W$ satisfies FJR.
\end{lemma}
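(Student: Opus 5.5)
Since all costs are positive integers, every feasible set $T$ has $\cost(T)\leq B$ and therefore at most $B$ alternatives. This gives at most $\sum_{b=0}^{B}\binom{m}{b}=O(m^B)$ candidate sets $T\subseteq C$ that could take part in an FJR violation. The plan is to enumerate all of them. For each such $T$ with $\cost(T)\leq B$, I then need to decide whether some coalition $S$ with $\abs{S}\geq n\cost(T)/B$ satisfies $\min_{i\in S}u_i(T)>\max_{i\in S}u_i(W)$.

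The key step is reducing the search over coalitions $S$ to polynomially many candidates for a fixed $T$. Call $i\in N$ \emph{eligible} for a threshold $\theta$ if $u_i(T)\geq \theta>u_i(W)$. A violating pair $(S,T)$ exists with $\min_{i\in S}u_i(T)\geq\theta$ and $\max_{i\in S}u_i(W)<\theta$ exactly when at least $\lceil n\cost(T)/B\rceil$ voters are eligible for $\theta$. It is enough to take $\theta=u_j(T)$ for some $j\in N$, since the minimum over $S$ is attained by a voter. So for each $T$ I would do the following: first compute $u_i(T)$ and $u_i(W)$ for all $i$, which costs $O(nB)$ and $O(nm)$ respectively (the latter once overall); then, for each of the $n$ candidate thresholds $\theta\in\set{u_j(T)}_{j\in N}$, count the eligible voters in $O(n)$ time and compare against $n\cost(T)/B$. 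This costs $O(n^2)$ per $T$, and $O(m^Bn^2)$ in total, plus an additive $O(nm)$ term that is dominated. The algorithm reports a violation exactly when some pair $(T,\theta)$ passes.

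For correctness, suppose $(S,T)$ is a violation. Take $\theta=\min_{i\in S}u_i(T)$. Every $i\in S$ is then eligible, because $u_i(T)\geq\theta>\max_{i'\in S}u_{i'}(W)\geq u_i(W)$, so the count is at least $\abs{S}\geq n\cost(T)/B$. Conversely, the set of eligible voters for a passing $\theta$ is itself a violating coalition. The main (mild) obstacle is the counting of subsets: I must make sure the bound $\abs{T}\leq B$ uses $\cost\colon C\to\N^+$, and that arithmetic on utilities is charged as unit-cost, or else absorbed in the same way the paper treats the bit-length factors. Neither issue should cause real difficulty.
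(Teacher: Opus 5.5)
Your proof is correct and is essentially the paper's. Both enumerate the $O(m^B)$ sets $T$ and, for each, test $n$ candidate thresholds with an $O(n)$ count; you threshold at $\min_{i\in S}u_i(T)$, while the paper thresholds at $\max_{i\in S}u_i(W)$ via $S=\set{j\in N: u_j(W)\leq u_i(W)<u_j(T)}$, which is a mirror-image choice with the same cost. One small fix is needed: exclude $T=\emptyset$, as the paper does by enumerating only $1\leq\abs{T}\leq B$, because $\cost(\emptyset)=0$ makes your test pass vacuously with an empty eligible set, although no violation with $T=\emptyset$ exists.
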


\begin{proof}
    $W$ doesn't satisfy FJR if and only if there exist $S\subseteq N$ and $T\subseteq C$ satisfying $\frac{|S|}{n} \geq \frac{\cost(T)}{B}$ and with $\max_{i \in S} u_i(W) < \min_{i \in S}  u_i(T)$. 
    
    Since each project has cost at least one, a set $T \subseteq C$ of cost at most $B$ contains at most $B$ alternatives.
    We iterate over all $T \subseteq C$ of size between 1 and $B$, discarding those of cost greater than $B$, to check if for any $T$, a set $S\subseteq N$ meeting the condition above exists. 

    For a fixed $T$, we iterate over the voters in $N$. For each $i\in N$, we let $S=\{j\in N: u_j(W) \leq u_i(W) < u_j(T)\}$. If $\abs{S}\geq \frac{\cost(T)}{B} n$, we stop and output that FJR is violated. If we iterated over all $i\in N$ for all $T$, without stopping, we return that FJR is satisfied.
    
    For correctness, first note that the algorithm only outputs that FJR is violated when it found $S$ and $T$ that constitute this violation. Thus, it only remains to show that the algorithm will always report that FJR is violated if $W$ doesn't satisfy FJR, i.e. if there exist $S\subseteq N$ and $T\subseteq C$ such that $\frac{\abs{S}}{n}\geq \frac{\cost(T)}{B}$ and $\max_{i \in S} u_i(W) < \min_{i \in S}  u_i(T)$. We let $i^S_{\max}(W)= \arg\max_{i\in S} u_i(W)$ be the voter in $S$ with the highest utility for $W$. 
    We know that $S\subseteq\set{j\in N: u_j(W)\leq u_{i^S_{\max}}(W) < u_j(T)}$, so for this $T$ and voter $i^S_{\max}(W)\in N$, the algorithm above found a set $S'\supseteq S$. Since $\abs{S'}\geq\abs{S}\geq \frac{\cost(T)}{B} n$, the algorithm returned that there exists an FJR violation.

    Note that there are $O(m+...+m^B)=O(m^B)$ such $T$ of size between 1 and $B$.
    For each $T$, we can do the search for an $S$ meeting the conditions in $O(n^2)$ time. This gives a total runtime of $O(m^B n^2)$. \qed
\end{proof}

\begin{proof}[\Cref{cor:fjr-k}]
    Since there exist only $O(m^B)$ outcomes of size at most $B$, \Cref{thm:fjr-k-verification} immediately implies a $O(m^B\cdot m^B \cdot n^2 )$ time voting rule for FJR: Iterate over all outcomes and check for each if it satisfies FJR; return the first that does.
\end{proof}

\section{Partial Triple Cover is \textbf{NP}-hard}\label[appendix]{app:partial-triple-cover}
Let us recall the definition of partial triple cover.

\alg{\textsc{PartialTripleCover} : Given a bipartite graph $G=(L\cup R, E)$, do there exist non-empty $S\subseteq L$ and $T\subseteq R$ such that $\abs{S}=\abs{T}$ and $\abs{\set{r \in T: \set{\ell,r} \in E}}\geq 3$ for all $\ell \in S$?}

We show that partial triple cover is \textbf{NP}-hard by reduction from the 3-regular subgraph problem. We refer to \cite{regularSubgraph} for a proof that \textsc{$3$-regularSubgraph} is $\NP$-hard.\footnote{We note that in fact, the graph constructed in \cite{regularSubgraph} in their reduction from exact cover by $3$-sets (\textsc{X3C}) to \textsc{3-regularSubgraph} can be applied, without change, to show that \textsc{PartialTripleCover} is $\NP$-hard by reduction from \textsc{X3C}: The graph they construct has a partial triple cover if and only if the original instance had an exact cover by $3$-sets. However, since the graph construction is quite involved\emdash though all the more beautiful\emdash we give a shorter, independent reduction from \textsc{3-regularSubgraph} here.}

\alg{\textsc{3-regularSubgraph}: Given a bipartite graph $G=(L\cup R, E)$, does $G$ contain a non-empty subgraph that is $3$-regular (that is, every vertex has degree exactly 3)?}

\begin{lemma}\label{lem:partial_triple_cover-hardness}
    \textsc{PartialTripleCover} is $\NP$-hard.
\end{lemma}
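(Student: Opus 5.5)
The plan is a reduction from \textsc{3-regularSubgraph}. Given a bipartite instance $G=(L\cup R,E)$, we output a bipartite graph $G'$, possibly $G$ itself, and argue that $G'$ has a partial triple cover if and only if $G$ has a non-empty $3$-regular subgraph. The backbone is a double-counting argument on the edges between $S$ and $T$.

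\emph{Easy direction.} Let $H$ be a non-empty $3$-regular subgraph with vertex sets $L_H\subseteq L$ and $R_H\subseteq R$. Since $G$ is bipartite, every edge of $H$ joins $L_H$ to $R_H$, so $3\abs{L_H}=\abs{E(H)}=3\abs{R_H}$ and hence $\abs{L_H}=\abs{R_H}$. Set $S=L_H$ and $T=R_H$. Then $S$ is non-empty, $\abs{S}=\abs{T}$, and every $\ell\in S$ has at least $3$ neighbours in $T$, namely its $H$-neighbours. This direction works verbatim for any bipartite graph.

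\emph{Converse direction.} Let $S,T$ be a partial triple cover and let $e(S,T)$ be the number of edges between them. Counting from the left gives $e(S,T)\geq 3\abs{S}$. If every vertex of $R$ had degree at most $3$, counting from the right would give $e(S,T)\leq 3\abs{T}=3\abs{S}$. Both inequalities would then be tight. So every $\ell\in S$ has exactly $3$ neighbours in $T$, and every $r\in T$ has exactly $3$ neighbours in $S$. Hence the subgraph induced on $S\cup T$ is a non-empty $3$-regular subgraph, as required.

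\emph{Main obstacle.} The converse needs all right-hand degrees to be at most $3$, and a general instance of \textsc{3-regularSubgraph} need not satisfy this. The proof therefore hinges on hardness of \textsc{3-regularSubgraph} restricted to bipartite graphs in which every vertex of $R$ has degree at most $3$. My first attempt would be to inspect the reduction from \textsc{X3C} in \cite{regularSubgraph} and check that one colour class of the constructed graph has maximum degree $3$, which seems plausible for a gadget built from triples. If so, the counting argument above finishes the proof. If not, I would design a local gadget that replaces each high-degree right vertex $r$ by a structure which, inside any $3$-regular subgraph, is forced to use exactly three of $r$'s original edges, while all new right-hand vertices still have degree at most $3$. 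I would then redo the two directions for the gadgeted graph. Failing both, I would fall back on the remark in the footnote and reduce directly from \textsc{X3C} using the construction of \cite{regularSubgraph}.
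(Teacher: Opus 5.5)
You have found the paper's central step, and the proposal has a genuine gap. Your double-counting argument is correct and is exactly what drives the paper's converse direction. You are also right that it cannot be run on $G$ itself. For example, take $L=\{\ell_1,\ldots,\ell_4\}$ and $R=\{r_1,\ldots,r_4\}$, with $r_1,r_2$ adjacent to every $\ell_i$, $r_3$ adjacent to $\ell_1,\ell_2$, and $r_4$ adjacent to $\ell_3,\ell_4$. Then $S=L$, $T=R$ is a partial triple cover, but there is no $3$-regular subgraph: its right side would have to lie in $\{r_1,r_2\}$, leaving no left vertex with three neighbours.

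The gap is that you stop exactly at the step that carries the content. You never exhibit an instance of \textsc{3-regularSubgraph} whose right-hand degrees are all at most $3$ and whose hardness is established. Options (1) and (3) amount to hoping that the X3C construction of \cite{regularSubgraph} has the needed property, or deferring to the footnote's unverified claim. Option (2) describes the gadget you would want without constructing it. As it stands this is a plan, not a proof.

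The missing idea is a simple global transformation rather than a local gadget. Put the \emph{edges} of $G$ on the right and all original vertices on the left, where high degree is harmless. The paper takes $L'=\{b_r: r\in R\}\cup\{c_\ell,d_\ell: \ell\in L\}$ and $R'=\{a_{\ell,r}:\{\ell,r\}\in E\}$, and joins each $a_{\ell,r}$ to $b_r$, $c_\ell$ and $d_\ell$. Every right vertex then has degree exactly $3$, so your counting argument applies verbatim.

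Duplicating each $\ell$ into $c_\ell,d_\ell$ is what balances the sizes. A $3$-regular $H$ with $|L_H|=|R_H|=h$ gives $S=\{b_r: r\in R_H\}\cup\{c_\ell,d_\ell:\ell\in L_H\}$ and $T=\{a_{\ell,r}:\{\ell,r\}\in E(H)\}$. Then $|S|=3h=|T|$, and each vertex of $S$ has exactly three neighbours in $T$.

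Conversely, tightness of the count forces every $a_{\ell,r}\in T$ to have all three of its neighbours in $S$. So $L_H=\{\ell: c_\ell\in S\}$, $R_H=\{r: b_r\in S\}$, $E_H=\{\{\ell,r\}: a_{\ell,r}\in T\}$ is a non-empty $3$-regular subgraph of $G$. Since $G'\neq G$, your easy direction must also be redone with these sets rather than $S=L_H$, $T=R_H$.
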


\begin{proof}
    Let $G=(L\cup R, E)$ be an instance of \textsc{3-regularSubgraph}. We construct an instance $G'=(L'\cup R', E')$ of \textsc{PartialTripleCover} as follows: $L'$ contains a vertex $b_r$ for every vertex $r\in R$ and vertices $c_\ell$ and $d_\ell$ for every vertex $\ell\in L$. For every edge $\set{\ell, r}\in E$ (where $\ell\in L, r\in R$ in $G$),  $R'$ contains a vertex $a_{\ell,r}$. We connect the vertices $a_{\ell,r}\in R'$ to $b_r, c_\ell$, and $d_\ell$. That is, for any $a_{\ell,r}\in R'$, $E'$ contains $\set{a_{\ell,r}, b_r}, \set{a_{\ell,r}, c_\ell}, \set{a_{\ell,r}, d_\ell}$. This graph construction is illustrated in \Cref{fig:P3C-reduction}.

    \begin{figure}[]
\centering
\begin{tikzpicture}[
    scale=0.8,
    every node/.style={font=\small},
    every path/.style={line width=0.8pt}
  ]
  \tikzset{
    whitenode/.style={circle, draw, fill=gray!20, inner sep=1.5pt, minimum size=0.6cm},
    blacknode/.style={circle, draw, fill=black!80, text=white, inner sep=1.5pt, minimum size=0.6cm}
  }

  \node[whitenode] (v) at (-4,1) {$v$};
  \node[whitenode] (u) at (-4,-1) {$u$};

  \node[blacknode] (xL) at (-2,2) {$x$};
  \node[blacknode] (yL) at (-2,0) {$y$};
  \node[blacknode] (zL) at (-2,-2) {$z$};

  \node[whitenode] (cv) at (2,2.5) {$c_v$};
  \node[whitenode] (dv) at (2,1.5) {$d_v$};

  \node[whitenode] (cu) at (2,-0.5) {$c_u$};
  \node[whitenode] (du) at (2,-1.5) {$d_u$};

  \node[blacknode] (avx) at (4,2) {$a_{v,x}$};
  \node[blacknode] (avy) at (4,0.75) {$a_{v,y}$};
  \node[blacknode] (auy) at (4,-0.5) {$a_{u,y}$};
  \node[blacknode] (auz) at (4,-2) {$a_{u,z}$};

  \node[whitenode] (x) at (6,2) {$b_x$};
  \node[whitenode] (y) at (6,0) {$b_y$};
  \node[whitenode] (z) at (6,-2) {$b_z$};

  \draw (v) -- (xL);
  \draw (v) -- (yL);
  \draw (u) -- (yL);
  \draw (u) -- (zL);

  \draw (dv) -- (avx);
  \draw (cv) -- (avx);
  \draw (cv) -- (avy);
  \draw (dv) -- (avy);
  \draw (avx) -- (x);
  \draw (avy) -- (y);

  \draw (cu) -- (auy);
  \draw (du) -- (auy);
  \draw (cu) -- (auz);
  \draw (du) -- (auz);
  \draw (auy) -- (y);
  \draw (auz) -- (z);

\end{tikzpicture}
\caption{An example of the reduction of an instance of \textsc{3-regularSubgraph} on the left to an instance of \textsc{PartialTripleCover} on the right. The vertices in $L$ and $L'$ are light gray, while the vertices in $R$ and $R'$ are black.}
\label{fig:P3C-reduction}
\end{figure}

We now show that $G=(L\cup R, E)$ has a 3-regular subgraph if and only if $G'=(L'\cup R', E')$ has a partial triple cover.

    ``$\Rightarrow$'': Assume $G=(L\cup R, E)$ has a  3-regular subgraph $H=(L_H \cup R_H, E_H)$. Then select  $S=\set{b_r: r\in R_H}\cup \set{c_\ell, d_\ell: \ell\in L_H}\subseteq L'$ and $T=\set{a_{\ell,r}:\set{\ell,r} \in E_H, \ell\in L, r \in R}\subseteq R'$. Since $H$ is a $3$-regular bipartite graph, $\abs{E_H}=3\abs{L_H}=3\abs{R_H}$, thus $\abs{S}=\abs{T}$. Furthermore, for any $b_r \in S$, we have that $r\in R_H$, so there are exactly 3 vertices $\ell \in L_H$ so that $\set{\ell,r}\in E_H$. Thus, for any $b_r\in S$, $$
    \abs{\set{a_{\ell,r} \in T: \set{a_{\ell,r}, b_r}\in E'}} =  \abs{\set{\ell \in L_H: \set{\ell,r} \in E_H}} = 3.$$ We can argue analogously for any $c_\ell, d_\ell\in S$. Thus, there exists a partial triple cover of $G'$.
    
    ``$\Leftarrow$'': Assume $G'=(L'\cup R', E')$ has a partial triple cover $(S,T)$. Then pick $L_H=\set{\ell \in L: c_\ell \in S}$, $R_H=\set{r \in R: b_r\in S}$, and $E_H=\set{\set{\ell, r}\in E: \ell \in L, r\in R, a_{\ell,r} \in T}$. All $a_{\ell,r}\in R'$ have degree exactly $3$ in $G'$ and $b_r,c_\ell,d_\ell \in S$ have at least $3$ neighbors in $T\subseteq R'$. Thus, every $b_r,c_\ell,d_\ell \in S$ has exactly $3$ neighbors in $T$ and every $a_{\ell,r} \in T$ has exactly $3$ neighbors in $S$. We get that for any  $r\in R_H$, $$\abs{\set{\ell \in L: \set{\ell,r} \in E_H}}=\abs{\set{\ell \in L: a_{\ell,r} \in T}}=\abs{\set{a_{\ell,r} \in T: \set{a_{\ell,r}, b_r}\in E'}}=3$$ and for any  $\ell \in L_H$, $$\abs{\set{r \in R: \set{\ell,r} \in E_H}}=\abs{\set{r \in R: a_{\ell,r} \in T}}=\abs{\set{a_{\ell,r} \in T: \set{a_{\ell,r}, c_\ell}\in E'}}=3.$$ Thus, $H=(L_H \cup R_H, E_H)$ is a 3-regular subgraph of $G$. \qed
\end{proof}

\section{Proof of \Cref{thm:fjr-strong-hardness-cost-utils}}\label[appendix]{app:fjr-hardness}

\fjrstronghardnesscostutils*

\begin{proof} 
    Let us assume that we have a polynomial time voting rule that satisfies FJR on all PB instances with cost utilities and $\cost\colon C\to [5]$. We will show that we can use this voting rule to devise a polynomial time algorithm for \textsc{PartialTripleCover}, thus obtaining $\P=\NP$ by \Cref{lem:partial_triple_cover-hardness}.
    
    Given an instance $G=(L\cup R,E)$ of \textsc{PartialTripleCover}, we construct a PB instance $\elec=(N,C,B,\cost, \paran{u_i}_{i\in N})$: For each vertex $\ell \in L$ we create five voters $i_\ell^1,...,i_\ell^{5} \in N$. For each vertex $r\in R$, we create one alternative $c_r\in C$ with $\cost(c_r)=2$, where voters $i^1_\ell, i^2_\ell$ gain utility from $c_r$ if $\ell,r$ share an edge in $G$, all other voters do not gain utility from $c_r$.
    Furthermore, we create alternatives $d_\ell \in C, d'_\ell \in C$ with $\cost(d_\ell)=5$ and $\cost(d'_\ell)=3$ for every $\ell \in L$. Voters $i^1_\ell,...,i^5_\ell$ get utility from $d_\ell$, voters $i^3_\ell,...,i^5_\ell$ get utility from $d'_\ell$, and all other voters get no utility. 
    We set $B=n=5\abs{L}$.
    
    Let $W_0=\set{d_\ell}_{\ell \in L}\subseteq C$ be an outcome of total cost $5\abs{L}=B$. We know that $u_{i}(W_0)=5$ for all $i\in N$. We will show that if $G$ has a partial triple cover, $W_0$ does not satisfy FJR, but if $G$ has no partial triple cover, $W_0$ is the unique outcome that satisfies FJR. This implies a polynomial-time algorithm for \textsc{PartialTripleCover}: Let $W\subseteq C$ be any outcome satisfying FJR as returned by our voting rule; if and only if $W\neq W_0$ we know that $G$ has a partial triple cover.
    
    For the first direction, let $S',T'$ be a partial triple cover of $G$. Consider $S = \set{i^1_\ell, i^2_\ell: \ell \in S'}$ and $T=\set{c_r: r\in T'}$. We get that $$u_{i^1_\ell}(T) = \sum_{c_r\in T}u_{i^1_\ell}(c_r) = 2\cdot \abs{r\in T': \set{\ell, r}\in E}\geq 6$$ and analogously $u_{i^2_\ell}(T)\geq 6$. Thus, there exist $S\subseteq N$, $T\subseteq C$, such that $\frac{|S|}{n}\ge \frac{\cost(T)}{B}$ (since $\frac{\abs{S}}{n} = \frac{2\abs{T'}}{n} = \frac{\cost(T)}{B}$) and $\min_{i\in S} u_i(T) \geq 6>5=\max_{i\in S}u_i(W_0),$ so $W_0$ doesn't satisfy FJR.
    
For the other direction, assume $G$ has no partial triple cover. Let $W\subseteq C$ be any outcome of cost at most $B$ other than $W_0$. Let $S'$ be the set of all $\ell \in L$ such that $d_\ell \notin W$ and let $T'$ be the set of all $r \in R$ such that $c_r\in W$.
First, assume $\abs{S'}<\abs{T'}$. Then, there exists some $\ell \in S'$ so that $d'_\ell \notin W$, since otherwise $\cost(W)\geq 5\cdot \abs{L\setminus S'} + 2 \cdot \abs{T'} + 3 \cdot \abs{S'}> 5 \cdot \abs{L} =B$. 
Since also $d_\ell \notin W$ by definition, we get that $W$ violates FJR for $S=\set{i^3_\ell, ..., i^5_\ell}$ and $T=\set{d'_\ell}$, since $\min_{i\in S} u_i(T)=3>0=\max_{i\in S}u_i(W)$. 

Thus, let us now consider $\abs{S'}\geq \abs{T'}$.  Let $S''$ be an arbitrary subset of $S'$ of size $\abs{T'}$. If $\abs{T'}\geq 1$, we know that since $G$ doesn't have a partial triple cover, there exists a vertex $\ell^* \in S''$ such that $\abs{r\in T':\set{\ell^*, r}\in E} \leq 2$. If $\abs{T'}= 0$, the same holds for some $\ell^* \in S'$ since $T'=\emptyset$ and $\abs{S'}\geq 1$. For this $\ell^*$, we know that $$u_{i^1_{\ell^*}}(W) = \sum_{\ell \in L\setminus S'}u_{i^1_{\ell^*}}(d_\ell) + \sum_{r \in T'}u_{i^1_{\ell^*}}(c_r) =  2\cdot \abs{r\in T':\set{\ell^*, r}\in E} \leq 4$$ and analogously $u_{i^2_{\ell^*}}(W)\leq 4$. Additionally, since $d_{\ell^*}\notin W$, we know that $u_{i^3_{\ell^*}}(W)=u_{i^4_{\ell^*}}(W)=u_{i^5_{\ell^*}}(W)\leq 3$. However, for $S=\set{i^1_{\ell^*},...,i^5_{\ell^*}}\subseteq N$, $T=\set{d_{\ell^*}}\subseteq C$, it holds that $\abs{S}/n = \cost(T)/B$ (both are $5/n$) and $$\min_{i\in S} u_i(T) =5>4\geq \max_{i\in S}u_i(W),$$ so $W$ doesn't satisfy FJR. Since there always exists an outcome of cost at most $B$ satisfying FJR (\Cref{prop:fjr-existance}), we get that $W_0$ is the unique outcome satisfying FJR. \qed \end{proof}

\section{Proof of \Cref{prop:promise-to-ejr}}\label{sec:promise-to-ejr-proof}

\promisetoejr*

\begin{proof}[\Cref{prop:promise-to-ejr}]
    Let us assume that we have a polynomial-time voting rule that satisfies EJR on all committee election instances where $u_i:C\to\set{0,1,..., 2n}$ for all $i\in N$. We will show that we can use this voting rule to devise a polynomial-time algorithm for \textsc{Bb-Pmc}.
    
    Given an instance $G=(L\cup R,E)$ and $q$ of \textsc{Bb-Pmc}, we construct an election instance $\elec=(N,C,k,\paran{u_i}_{i \in N})$, using almost the same construction as in the proof of \Cref{thm:fjr-strong-hardness}: For each vertex $\ell \in L$ we create one voter $i_\ell \in N$. For each vertex $r\in R$, we create one alternative $c_r\in C$, where voters gain utility $2$ from $c_r$ if $\ell,r$ share an edge in $G$, else $0$. 
    Furthermore, we create one \emph{selfish} alternative $d_\ell \in C$ for every voter $i_\ell \in N$, where voter $i_\ell$ gets utility $2q-1$ from $d_\ell$ and all other voters get no utility. 
    We set $k=n=\abs{L}$.
    
    Consider the \emph{selfish} committee $W_0=\set{d_\ell}_{\ell \in L}\subseteq C$, which has size $n=k$. We know that $u_{i}(W_0)=2q-1$ for all $i\in N$. We will show that if $G$ has a size $q$ balanced biclique, $W_0$ does not satisfy EJR, but that if $G$ has no partial $q$-cover, $W_0$ is the unique committee that satisfies EJR. This implies a polynomial-time algorithm for \textsc{Bb-Pmc}: Let $W\subseteq C$, $\abs{W}\leq k$ be any committee satisfying EJR as returned by our voting rule; if and only if $W\neq W_0$ we know that $G$ has a size $q$ balanced biclique.
    
    For the first direction, let $S',T'$ be a size $q$ balanced biclique of $G$. Consider $S = \set{i_\ell: \ell \in S'}$ and $T=\set{c_r: r\in T'}$. 
    Since $\set{\ell, r}\in E$ for all $\ell \in S'$ and $r\in T'$, we get that  $u_{i}(c)=2$ for all $i\in S$ and $c\in T$. Thus, there exist $S\subseteq N$, $T\subseteq C$, such that $\abs{S}/n \geq \abs{T}/k$ and 
    $$\sum_{c\in T} \min_{i\in S} u_i(c) = \sum_{c\in T} 2 = 2q > 2q-1=\max_{i\in S}u_i(W_0),$$ so $W_0$ does not satisfy EJR.
    
    For the other direction, assume $G$ has no partial $q$-cover. Let $W\subseteq C$ be any committee of size at most $k$ different from $W_0$. Let $S'$ be the set of all $\ell\in L$ such that $d_\ell \notin W$ and let $T'$ be the set of all $r\in R$ such that $c_r\in W$. Since $\abs{L\setminus S'} + \abs{T'} \leq k$ and $\abs{L\setminus S'} = k - \abs{S'}$, we know that $\abs{S'}\geq \abs{T'}$. Let $S''$ be an arbitrary subset of $S'$ of size $\abs{T'}$.
     If $\abs{T'}\geq 1$, we know that since $G$ doesn't have a partial $q$-cover, there exists a vertex $\ell^* \in S''$ such that $\abs{r\in T':\set{\ell^*, r}\in E} \leq q-1$. If $\abs{T'}=0$, we know that such a $\ell^* \in S'$ exists, since $T'=\emptyset$ and $\abs{S'}\geq 1$.
    For this $\ell^*$, it holds that $$u_{i_{\ell^*}}(W) = \sum_{\ell \in L\setminus S'}u_{i_{\ell^*}}(d_\ell) + \sum_{r \in T'}u_{i_{\ell^*}}(c_r) =  2\cdot \abs{r\in T':\set{\ell^*, r}\in E} \leq 2(q-1).$$ However, for $S=\set{i_{\ell^*}}\subseteq N$, $T=\set{d_{\ell^*}}\subseteq C$, it holds that $\abs{S}/n\geq \abs{T}/k$ (both are $1/n$) and 
    $$ \sum_{c\in T}\min_{i\in S}u_i(c) = u_{i_{\ell^*}}(d_{\ell^*}) = 2q-1>2(q-1)\geq \max_{i\in S}u_i(W),$$ so $W$ doesn't satisfy EJR. Since there always exists a committee of size $k$ satisfying EJR (\Cref{prop:implications,prop:fjr-existance}), we get that $W_0$ is the unique committee satisfying EJR. \qed
\end{proof}
\end{document}